\newtheorem{example}{Example}
\newtheorem{theorem}{Theorem}[section]
\newtheorem{lemma}[theorem]{Lemma}
\newtheorem{proposition}[theorem]{Proposition}
\newtheorem{corollary}[theorem]{Corollary}
\newtheorem{definition}[theorem]{Definition}
\newtheorem{observation}[theorem]{Observation}
\newcommand{\voters}{\mathcal{V}}
\newcommand{\election}{{\xi}}
\newcommand{\candids}{\mathcal{C}}
\newcommand{\voter}{v}
\newcommand{\metric}{\mathcal{M}}
\newcommand{\done}{{D}}
\newcommand{\dtwo}{{\bar D}}
\newcommand{\shL}{{ \small \texttt {\#}} \ell}
\newcommand{\shR}{{ \small \texttt {\#}} r}
\newcommand{\shX}{{ \small \texttt {\#}} a}
\newcommand{\sob}{\sum_{\voter \in \voters_\election: x_v \in \regB}}
\newcommand{\cost}{\mathbf {sc}}
\newcommand{\regA}{\textsf{A}}
\newcommand{\regB}{\textsf{B}}
\newcommand{\regC}{\textsf{C}}
\newcommand{\regD}{\textsf{D}}
\newcommand{\canL}{\ell}
\newcommand{\canR}{r}
\newcommand{\probL}{\mathtt P_{\canL}}
\newcommand{\probR}{\mathtt P_{\canR}}
\begin{document}

\title{On the Distortion Value of  Elections with Abstention}

\author[1,2]{Mohammad Ghodsi}
\author[1]{Mohamad Latifian}
\author[2]{Masoud Seddighin}
\affil[1]{Sharif University of Technology}
\affil[2]{Institute for Research in Fundamental Sciences (IPM)}

\maketitle              
%


\maketitle

\begin{abstract}
In Spatial Voting Theory, distortion is a measure of how good the winner is. It has been proved that no deterministic voting mechanism can guarantee a distortion better than $3$, even for simple metrics such as a line. In this study, we wish to answer the following question: how does the distortion value change if we allow less motivated agents to abstain from the election?

We consider an election with two candidates and suggest an abstention model, which is a general form of the abstention model proposed by Kirchg{\"a}ssner  \cite{kirchgassner2003abstention}. Our results  characterize the distortion value and provide a rather complete picture of the model.  \footnote{A preliminary version of this paper is accepted in AAAI 2019.}
\end{abstract}

\section{Introduction}
\label{intro} 
The goal in Social Choice Theory is to design mechanisms that aggregate agents' preferences into a collective decision. 
Voting is a well-studied method for aggregating preferences with many applications in artificial intelligence and multi-agent systems. 
Roughly, a voting mechanism takes the preferences of the agents over a set of alternatives and selects one of them as  winner. 
\newline One fruitful approach to estimate the quality of a voting mechanism is to use the utilitarian view which assumes that each agent has cost over the alternatives \cite{procaccia2006distortion,caragiannis2011voting,boutilier2015optimal,benade2019low,goel2018relating,benade2017preference}. For example,  spatial models locate the voters and the alternatives in a finite metric space $\metric$, and the cost of voter $\voter_i$ for Alternative $x$  equals to their distance \cite{anshelevich2015approximating,anshelevich2016ordinal,anshelevich2017randomized,goel2017metric,black1948rationale,barbera1993generalized,merrill1999unified}.  
Considering these costs, the optimal candidate is defined to be the candidate that minimizes the social cost (the total cost of the voters). Ideally, we would like the optimal candidate to be the winner; however, since voting mechanisms only take the ordinal preferences of voters as input, it is reasonable to expect that the winner is not always optimal. The question then arises: how good is the winner, i.e., what is the worst-case ratio of the social cost of the winner to the social cost of the optimal candidate? This ratio is called the distortion value of a voting mechanism. It is known that no deterministic voting mechanism can guarantee a distortion better than $3$, even for simple metrics such as a line~\cite{anshelevich2015approximating}. To see this, consider the example shown in Figure \ref{3app}. In this example, candidate $\canL$ is the optimal candidate, and under the plurality voting rule
\footnote{For two candidates, all the well-known  deterministic voting mechanisms (e.g. Borda,  $k$-approval, Copeland, etc) turn into plurality.}
candidate $\canR$ is the winner. Thus, the distortion value is
$$ \frac{0.51 (0.5-\varepsilon) + 0.49 \cdot 1}{0.51 (0.5 + \varepsilon)} \simeq 3.$$ 

\begin{figure}[h]
	\centerline{\includegraphics{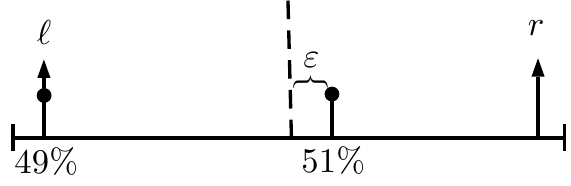}}
	\caption{An example with distortion value close to 3. In this example, $49\%$ of the voters are located at point $0$ and $51\%$ of the voters are located at point $0.5 + \epsilon$. In addition, candidates $\mathsf{L}$ and $\mathsf{R}$ are respectively located at points $0$ and $1$.}
	\label{3app}
\end{figure}

However, the example of Figure \ref{3app} seems unrealistic in some ways. Although the voters located near the point $0.5$ are closer to $\canR$, they have a very low incentive to vote for $\canR$, since their costs for both candidates are almost equal. On the other hand, agents located at $0$ have a strong incentive to vote for $\canL$. Indeed, if voters are allowed to abstain, which is a  natural assumption in many real-world elections, we expect $\canL$ to be the winner rather than $\canR$. 
In this study, our goal is to tackle this problem:
 \begin{quote}
	How does the distortion value change, if we allow less motivated agents to abstain?
\end{quote}

\subsection{ Abstention} \label{abstention}
Scientists have long studied the factors affecting participation in an election. For example, Wolfinger and Rosenstone \cite{feddersen1999abstention} argue that more educated voters participate with a higher probability, or Lijphart \cite{lijphart1997unequal} discusses that the voters on the left side of the political spectrum participate less frequently. Similarly, the decision to vote may rely on variables such as income level or the sense of civic duty \cite{wolfinger1980votes}. 

Traditionally, both game-theoretic and decision-theoretic models of turnout have been proposed. At the heart of most of these models lies the assumption that there are costs for voting.\footnote{There are other decision theoretic explanations of abstention that do not rely on costs, e.g., see \cite{ghirardato1997indecision}.}  These costs include the costs of collecting and processing information, waiting in the queue and voting itself. Presumably, if a voter decides to abstain, she does not have to pay these costs. Therefore, a rational voter must receive utility from voting. There is evidence suggesting that voters behave strategically when deciding to vote and take the costs and benefits into account. For example, Riker and Ordeshook \cite{riker1968theory} show that the turnout is inversely related to voting costs. 

Apart from social-psychological traits, other studies suggest that voters' abstention may stem from their ideological distances from the candidates. The work of Downs \cite{downs1957economic} initiated this line of research. He argues that in a two-candidate election under the majority rule, the choice between voting and abstaining is related to the voter's comparative evaluation of the candidates. Riker and Ordeshook \cite{riker1968theory} later improve this model by reformulating the original equation to incorporate other social and psychological factors.

Many empirical studies in spatial theory of abstention  suggest that the voters are more likely to abstain when they feel indifferent toward the candidates or alienated from them \cite{kirchgassner2003abstention}. The models introduced by Downs \cite{downs1957economic} and Riker and Ordeshook \cite{riker1968theory} are only capable of explaining  the indifference-based abstention which
occurs when the difference between the costs of candidates for a voter is too small to justify voting costs. On the other hand, these models cannot justify alienation-based abstention, which occurs when a voter is too distant from the alternatives to justify voting costs. To alleviate this, some studies argue that the relative ideological distance plays a more critical role than the absolute distance \cite{kirchgassner2003abstention,geys2006rational}.    Our model of abstention in this paper a generalization of the model introduced by Kirchg{\"a}ssner \cite{kirchgassner2003abstention} which incorporates the relative distances.

\subsection{Our Work}
In this paper, we consider the effect of abstention on the distortion value.  
In our study, there are two candidates, and the voters decide whether to vote or abstain based on a comparison between the cost (i.e., distance) of their preferred alternative and the cost of the other alternative.  We define the concepts of \emph{expected winner} and \emph{expected distortion} to evaluate the distortion of an election in our model.  Our results  characterize the distortion value and provide a complete picture of the model. For the special case that our abstention model conforms exactly to that of Kirchg{\"a}ssner \cite{kirchgassner2003abstention}, we show that the distortion of the expected winner is upper bounded by $1.522$. 

We also give an almost tight upper bound on the expected distortion value of large elections. We show that for any $\alpha>0$ and a large enough election (in term of the number of voters), the expected distortion is upper-bounded by $(1+2\alpha)\done^*$, where $\done^*$ is the distortion of the expected winner.

Finally, we generalize our results to include arbitrary metric spaces. 
We show that the same upper bounds obtained for the distortion value for the line metric also work for any arbitrary metric space.

\subsection{Related Work}
The utilitarian view, which assumes that the voters have costs for each alternative, is a well-known approach in welfare economics \cite{roemer1998theories,ng1997case} and has received attention from the AI community during the past decade \cite{procaccia2006distortion,boutilier2015optimal,branzei2013bad,pivato2016asymptotic,anshelevich2017randomized,caragiannis2017subset,goel2017metric,gross2017vote,amanatidis2019peeking,caragiannis2011voting}. Procaccia and Rosenschein \cite{procaccia2006distortion} first introduced distortion as a benchmark for measuring the efficiency of a social choice rule in utilitarian settings. The worst-case distortion of many social choice functions is shown to be high or even unbounded.  However, imposing some mild constraints on the cost functions yields strong positive results. One of these assumptions  which is reasonable in many political and social settings, is the spatial assumption which assumes that the agent costs form a metric space   \cite{enelow1984spatial,merrill1999unified,feldman2016voting,anshelevich2015approximating,anshelevich2016ordinal,pierczynski2019approval,munagala2019improved}. 

Anshelevich, Bhardwaj and Postl \cite{anshelevich2015approximating} were first to analyze the distortion of ordinal social choice functions when evaluated for metric preferences. For plurality and Borda rules, they prove that the worst-case distortion is $2m-1$, where $m$ is the number of alternatives. On the positive side, they show that for the Copeland rule, the distortion value is at most 5. They also prove the lower bound of 3 for any deterministic voting mechanism and conjecture that the worst-case distortion of Ranked Pairs social choice rule meets this lower-bound. This conjecture is later refuted by Goel,  Krishnaswamy, and Munagala \cite{goel2017metric}. Recently, Munagala and Wang \cite{munagala2019improved} present a weighted tournament rule with distortion of $4.236$.

 In addition to deterministic social choice rules, the distortion of randomized rules have been also studied in the literature.  The output of such mechanisms is a probability distribution over the set of alternatives rather than a single winning alternative. 
 Anshelevich and Postl \cite{anshelevich2017randomized}  show that for $\alpha$-decisive metric spaces \footnote{ In an $\alpha$-decisive metric, for every voter, the cost of her preferred choice is at most $\alpha$ times the cost of her second best choice.} any randomized rule has a lower-bound of $1+\alpha$ on the distortion value. For the case of two alternatives, they propose an optimal algorithm with the expected distortion of at most $1+\alpha$. Cheng et al. \cite{cheng2017distortion} characterized the positional voting rules with constant expected distortion value (independent of the number of candidates and the metric space). 
 

 Chen, Dughmi, and Kempe \cite{cheng2017people} consider the case that candidates are drawn randomly from the population of voters. 
They prove the tight bound of $1.1716$ for the distortion value in the line metric and an upper-bound of $2$ for an arbitrary metric space.

In addition to the studies mentioned in Section \ref{abstention}, there are many other studies that consider the effect of abstention in various types of elections. For example,  Desmedt and Elkind in \cite{desmedt2010equilibria} propose a game theoretic analysis of the plurality voting with the possibility of abstention and characterize the preference profiles that admit a pure Nash equilibrium. Rabinovich et al. \cite{rabinovich2015analysis} consider the computational aspects of iterative plurality voting with abstention.  Also, related to our work is the concept of \emph{embedding into voting rules} introduced by  Caragiannis and Procaccia \cite{caragiannis2011voting}.
An embedding is a set of instructions that suggests each agent how to vote, based only on the agent’s own utility function. 
For example, when the voting mechanism is majority, one possible embedding is that voters vote for  each candidate with a probability which is proportion to their utility for that candidate. Among other results, Caragiannis and Procaccia \cite{caragiannis2011voting} show that this embedding results in constant distortion.  
Indeed, our abstention model can be seen as a embedding for elections with majority rule where voters are allowed to abstain.

\section{Preliminaries}
\label{model}

In our study, every election $\election$ consists of four ingredients:
\begin{itemize}
	\item A set $\voters_\election$ of $n$ voters. We denote the $i$'th voter by $\voter_i$.
	\item A set $\candids_\election$ candidates. In this study, we suppose that there are only two candidates and denote the candidates by  $\canL$ (left candidate) and $\canR$ (right candidate). \footnote{In few cases, we also use $\ell'$ and $r'$ to refer to the left and right candidates.}
	\item A finite metric space $\metric_\election$ where the candidates and the voters are located. Unless explicitly stated otherwise, we suppose that $\metric_\election$ is a line, and $\canL$ and $\canR$ are located respectively  at points $0$ and $1$. 
	In addition, each voter is attributed a value $x_i \in (-\infty,\infty)$ which shows her location on the line. We denote by $d_{i, a}$, the distance between voter $\voter_i$ and alternative $ a \in \{\canL,\canR\}$.  
	\item A mechanism by which the winner is selected. In this paper, we consider a simple scenario where the winning candidate is elected via the majority rule (in case of a tie, the winner is determined by tossing a fair coin). Note that for two candidates, almost all the well-known deterministic voting mechanisms select the candidate preferred by the majority as winner.  
\end{itemize}


\begin{definition}
	For an election $\election$ and candidate $a \in \{\canL,\canR\}$, we define the social cost of $a$ in $\election$ as
	$$
	\cost_\election(a) = \sum_{\voter_i \in \voters_\election} d_{i,a}.
	$$
	The optimal candidate of election $\election$, denoted by $o$ is the candidate that minimizes the social cost, i.e.,
	$$
	o_\election = \arg \min_{a \in \{\canL,\canR\}} \cost_\election(a).
	$$
\end{definition}


We suppose that each voter either abstains or votes for one of the candidates. In Section \ref{vb} we give a formal description of the voting behavior of the agents.

\subsection{Voting Behavior of Individuals}\label{vb}
We employ a simple probabilistic model, where each voter independently decides whether to abstain or participate by evaluating her distances from the candidates. Fix an election $\election$ and  a voter $\voter_i \in \voters_\election$ and let $a \in \{\canL,\canR\}$ be the candidate closer to $\voter_i$ in $\metric_\election$ and $\bar{a}$ be the other candidate. We suppose that $\voter_i$ votes sincerely for her preferred candidate $a$ with a probability $p_i$
where $p_i$ is a function of $d_{i,a}$ and $d_{i,\bar{a}}$, and abstains with probability $1-p_i$.  

Denote by $f$ the probability function from which  $p_i$ is derived, i.e., $p_i = f(d_{i,a}, d_{i,\bar{a}})$ 
. Since $f$ represents the probability of voting, we expect $f$ to satisfy certain axiomatic assumptions. Recall that in  spatial voting models, there are two crucial sources of abstention \cite{kirchgassner2003abstention}:  

\begin{itemize}
	\item \textbf{Indifference-based Abstention (IA):} the smaller the difference between the distances of a voter from the candidates is, the less likely it is that she casts a vote.
	
	\item \textbf{Alienation-based Abstention (AA):} the further a voter is located from his preferred candidate, the less likely it is that she casts a vote. 
	
\end{itemize}

To illustrate, for the voters in Figure \ref{CE}, we have:

\begin{itemize}
	\item Voters $\voter_1,\voter_2,$ and $\voter_3$ prefer $\canL$ and voters $\voter_5$ and $\voter_6$ prefer $\canR$. 
	\item Voter $\voter_1$  has a strong incentive to cast a vote since her cost for $\canL$ is zero. 
	\item Voter $\voter_4$ always abstains, since her costs for both the candidates are equal (IA).
	\item For voters $\voter_5$ and $\voter_6$, we have $p_5\geq p_6$, since $\voter_6$ is more alienated (AA).
	\item For voters $\voter_2$ and $\voter_3$, we have $p_2\!\geq\! p_3$, since $d_{2,\canL}\! \leq\! d_{3,\canL}$, and  $d_{2,\canR}\!-\!d_{2,\canL} \!\geq\! d_{3,\canR}\!-\!d_{3,\canL}$ (IA,AA).  

\end{itemize} 

\begin{figure}
	\centerline{\includegraphics{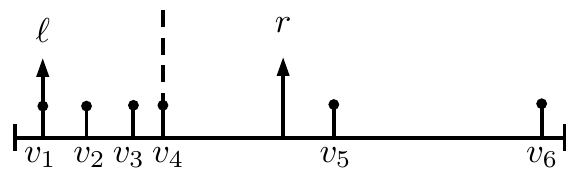}}
	\caption{A simple election.}
	\label{CE}
\end{figure}
As mentioned, the models of Downs \cite{downs1957economic} and Riker and Ordeshook \cite{riker1968theory} are only capable of explaining the Indifference-based abstention, since they only consider the absolute difference between the distances of the candidates to a voter. To resolve this, some recent studies argue that the relative distance, rather than absolute distance, is relevant. 
In this study, we follow the model of Kirchg{\"a}ssner \cite{kirchgassner2003abstention} which is based on the relative distances. The idea is that the probability that a voter casts a vote depends on her ability to distinguish between the candidates. By Weber–Fechner's law
(see \cite{fechner2012elemente}), the ability to distinguish between the candidates depends on their relative distances to the voter. Formally, the probability $p_i$ that voter $\voter_i$ votes for $a$ is calculated via the following formula: 
\begin{equation}
\label{abstentionf}
p_i = f(d_{i,a},d_{i,\bar{a}}) = \frac{|d_{i,a} - d_{i, \bar{a}}|}{d_{i,a} + d_{i, \bar{a}}}.
\end{equation}
Here we consider a more general form of Equation \eqref{abstentionf}. We suppose that each voter $\voter_i$ in election $\election$ casts a vote with probability $p_i$, where
\begin{equation}
p_i = {f}_{\beta} (d_{i,a},d_{i,\bar{a}}) = \left( \frac{|d_{i,a} - d_{i, \bar{a}}|}{d_{i,\bar{a}} + d_{i,a}}\right)^{\beta},
\end{equation}
where $\beta$ is a constant in $[0,1]$.
Figure \ref{betaf} shows the behavior of ${f}_{\beta}$ for different values of $\beta$ and different locations on the line. As is clear from Figure \ref{betaf}, for the smaller values of $\beta$, voters are more eager to participate. Indeed, the exponent $\beta$ can be seen as a quantitative measure of how much this ideological distance matters. For the special case of $\beta=0$, voters always participate in the election, regardless of their location. We refer to $\beta$ as the participation parameter. It can be easily observed that for any $0 \leq \beta \leq 1$, function ${f}_{\beta}$ satisfies all the desired criteria.

\begin{figure}
	\centerline{\includegraphics[scale=0.28]{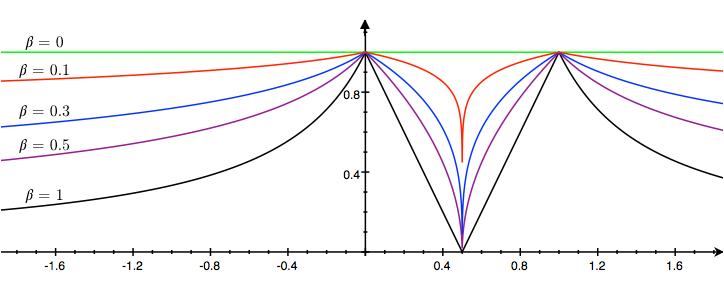}}
	\caption{${f}_{\beta}$ for different values of $\beta$, when two candidates $\canL$ and $\canR$ are located at points $0$ and $1$. For any point $z$, the curves show $f_{\beta}(|z-0|,|z-1|)$ for different $\beta$.}
	\label{betaf}
\end{figure}

\subsection{ Expected Winner and  Expected Distortion}

As mentioned, our assumption is that the winner is determined by the  majority rule.  However, according to the stochastic behavior of the voters, the winner is not deterministic, i.e., each candidate has a probability of winning. Denote by ${\small \texttt {\#}}{a}_\beta$, the expected number of voters who vote for candidate $a$, when the participation parameter is $\beta$. Furthermore, denote by $\mathtt{P}_{a,\beta}$, the probability that candidate $a$ wins the election, when the participation probability is $\beta$. 
We define the \textbf{expected winner} of  election $\election$ for participation parameter $\beta$, denoted by $\omega_{\election,\beta}$ as the candidate with the maximum expected number of votes. 
$$\omega_{\election,\beta} =  \arg \max_{a \in \candids_\election} { \small \texttt {\#}}{a}_\beta.$$

\begin{definition}
	For election $\election$ and a candidate $a \in \{\canL,\canR\}$ we define the distortion of $a $ in election $\election$, denoted by $D(a)$, as the ratio
	$
	\cost_\election(a) / \cost_\election(o_\election).
	$
\end{definition}

By definition, the distortion of the optimal candidate is $1$.
We discuss two approaches to evaluate the distortion of an election $\election$. In the first approach, we evaluate election $\election$ by the distortion of its expected winner, i.e., $D(\omega_{\election,\beta})$. 
  Another approach  is to define the distortion of election $\election$ as the expected distortion of the winner, over all possible outcomes, i.e.,
\begin{equation}
\label{eddef}
\dtwo_\beta(\election) = \mathtt{P}_{\ell,\beta} \cdot D(\canL) + \mathtt{P}_{r,\beta} \cdot D(\canR). 
\end{equation}
Finally, for any $0 \leq \beta \leq 1$, we define worst-case distortion values ${D}^*_\beta$ and $\dtwo^*_\beta$ as: 
$$
\done^*_\beta = \max_{\election \in \Omega} D(\omega_{\election,\beta}), $$ 
and 
$$
\dtwo^*_\beta = \max_{\election \in \Omega} \dtwo_\beta(\election),
$$
where $\Omega$ is the set of all possible elections $\election$.
We dedicate two separate sections to analyze  the value of  $
\done^*_\beta  $ and  
$
\dtwo^*_\beta$.
Even though the value of $\done^*_\beta $ and  $\dtwo^*_\beta$ essentially depend on $\beta$, we provide necessary tools to analyze  distortion these values for any $\beta \in [0,1]$.

For convenience, in the rest of the paper, when $\beta$ is  fixed we drop  the subscript `$\beta$' and simply use $ \probL,\shX $ instead of $\mathtt{P}_{\ell,\beta}, {\small \texttt {\#}}{a}_\beta .$
\section{Distortion of the Expected Winner}
\label{EW}
Throughout this section, we analyze the worst-case  distortion of the expected winner. Recall that the expected winner is the candidate with a higher expected number of votes. 
There are two reasons why we consider the distortion value of the expected winner. First, since the number of votes that each candidate receives is concentrated around its expectation \footnote{We can show this claim using concentration bounds such as \emph{Hoeffding}. A simple form of this inequality states that for $n$ independent random variables bounded by $[0,1]$, we have
$$
\mathbb{P}(S_n - \mathbb{E}[S_n]>t ) \leq \exp(-2nt^2),
$$
where $S_n$ is the sum of the variables. 
 }, in elections with a large number of voters, the expected winner has a very high chance of winning; especially when there is a non-negligible separation between the expected number of votes that each candidate receives.
Secondly, we use the tight upper-bound on the distortion value of the expected winner to prove an upper bound on the expected distortion of the election for the second approach.
Recall that the probability that a voter $\voter_i$ casts a vote for his favorite candidate in election $\election$ is:  
$$
{f}_{\beta} =  \left(\frac{|d_{i,\canL} - d_{i, \canR}|}{d_{i,\canL} + d_{i, \canR}}\right)^{\beta}.
$$ 
In this section, we suppose without loss of generality that candidate $\canL$ is the expected winner. Moreover, we assume that the optimal candidate is $\canR$; otherwise the distortion  equals $1$. 
We also consider four regions $\regA,\regB,\regC$ and $\regD$ as in Figure \ref{fig:line}.

\begin{figure}[h]
	\begin{center}
		\includegraphics{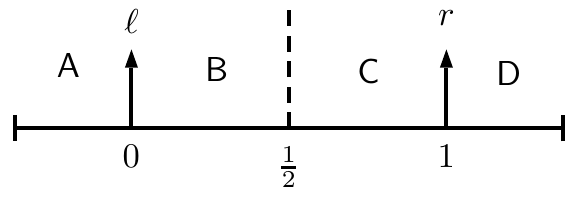}
	\end{center}
	\caption{Regions $\regA,\regB,\regC,$ and $\regD$}
	\label{fig:line}
\end{figure}

In Theorem \ref{theorem1} we state the main result of this section. 

\begin{theorem}
	\label{theorem1}
	For any $\beta \in [0,1]$, 
	there exists an election $\election$, such that $D(\omega_{\election,\beta}) = \done^*_\beta$ 
	and the voters in $\election$ are located at  two different locations $x_b \in \regB$ and $x_d \in \regD$. 
\end{theorem}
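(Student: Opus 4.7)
I would take any election $\election^*$ achieving $\done^*_\beta$ and, normalising so that $\canR$ is the optimum and $\canL$ is the expected winner (otherwise the distortion equals $1$), perform a series of ``mass-reshaping'' operations that leave the distortion $\cost_\election(\canL)/\cost_\election(\canR)$ weakly larger and the expected-winner inequality $\shL \geq \shR$ intact, while emptying the regions $\regA, \regC$ and concentrating the mass in $\regB$ and $\regD$ each at a single point.

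Eliminating $\regA$ is easy: translating a voter at $x<0$ to $x=0$ decreases both $d_{i,\canL}$ and $d_{i,\canR}$ by the same amount, which strictly raises the ratio (it exceeds $1$) while pushing her participation probability to $1$ and leaving her vote with $\canL$, so both criteria improve. Eliminating $\regC$ is the more delicate step: a voter at $x\in(0.5,1)$ contributes to both cost sums and, worse, casts her ballot for $\canR$. I plan to replace her by a convex combination of a $\regB$ voter and a $\regD$ voter with weights and positions chosen so that both cost sums are preserved exactly, and then use the subadditivity of $t\mapsto t^\beta$ for $\beta\in[0,1]$ to show that the resulting change in $\shL - \shR$ is non-negative. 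The ratio is unchanged and the winner constraint only gains slack, so the $\regC$-mass can be reassigned without loss.

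For the final consolidation, note that on $\regB$ the participation function $f_\beta(x)=(1-2x)^\beta$ is concave in $x$ for $\beta\in[0,1]$, while the cost contributions $x$ and $1-x$ are linear; collapsing all $\regB$ voters to their mean position preserves both $\cost_\election(\canL)$ and $\cost_\election(\canR)$ and weakly increases $\shL$ by Jensen's inequality. Symmetrically, $(2x-1)^{-\beta}$ is convex on $\regD$, so collapsing the $\regD$ voters weakly decreases $\shR$. The vote slack thus freed can be spent on an infinitesimal translation of $x_b$ toward $0.5$ or of $x_d$ toward $1$, both of which strictly increase the ratio. Iterating until the winner constraint tightens yields the claimed two-point configuration with $x_b \in \regB$ and $x_d \in \regD$.

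The main obstacle I foresee is the $\regC$ elimination: the replacement voters must land strictly inside $\regB$ and $\regD$ (to avoid the boundary degeneracy of $f_\beta$ at $x=0.5$ and $x=1$), and the inequality governing $\shL-\shR$ must be verified uniformly in $\beta\in[0,1]$, including the endpoints $\beta=0$ (where everyone participates and the $\regC$ voter's ``vote cost'' is as large as possible) and $\beta=1$ (where $f_\beta$ is linear and Jensen's inequality collapses to equality, so any gain must come purely from the geometric rearrangement rather than from concavity).
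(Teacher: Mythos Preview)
Your overall architecture---start from an optimum, apply distortion-nondecreasing, winner-preserving displacements to empty $\regA$ and $\regC$, then collapse $\regB$ and $\regD$ by concavity/convexity---is exactly the paper's. Your $\regA$ step is Lemma~\ref{lemma1}, and your Jensen consolidations in $\regB$ and $\regD$ are the content of Lemma~\ref{lem3} and Corollary~\ref{ColBtop}. The final ``spending slack'' paragraph is unnecessary for the theorem as stated (once you start from a maximiser, any further strict improvement is impossible, so the two-point form you reach already attains $\done^*_\beta$), but it is harmless.

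The genuine gap is your $\regC$ elimination. Replacing a single $\regC$ voter by a \emph{convex combination} of a $\regB$ voter and a $\regD$ voter introduces fractional mass, which is outside the discrete-voter model you are working in; and if you try to keep integer mass by turning one voter into two, a quick calculation shows you cannot preserve both cost sums (the system forces $x_d=\tfrac12$). The paper avoids this entirely by a different manoeuvre: since $\canL$ is the expected winner, there is already at least one voter in $\regB$, and Lemma~\ref{lembc} pairs a $\regC$ voter at $x_j$ with an existing $\regB$ voter at $x_i$, shifting them by equal and opposite amounts so that both cost sums (hence the ratio) are exactly preserved, while the subadditivity you mention handles the vote margin. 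The $\regC$ voter lands at $\tfrac12$ or at $1$ depending on which of $d_{i,\canL},d_{j,\canR}$ is smaller (Corollary~\ref{ColBC}). This pairing-with-an-existing-$\regB$-voter idea is what your proposal is missing; once you adopt it, the boundary worries you flag (degeneracy at $\tfrac12$ and $1$, and the $\beta\in\{0,1\}$ endpoints) dissolve, since the displacement is exact on the cost side and the vote-margin inequality reduces to $(a-b)^\beta\ge a^\beta-b^\beta$.
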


The basic idea to prove Theorem \ref{theorem1} is as follows: we prove that for every election $\election$, there exists an election $\election'$ with $\done(\omega_{\election',\beta}) \geq \done(\omega_{\election,\beta})$, such that the voters in $\election'$ are located in at most $2$ different locations. To show this, we collect the voters in $\election$ by carefully moving them forward and backward via a sequence of \emph{valid displacements}, as defined in Definition \ref{validd}.

\begin{definition}
	Define a displacement as the operation of moving a subset of the voters forward or backward on the line to a new location. A displacement is valid if it does not alter the expected winner, and furthermore, does not decrease the distortion value of the expected winner.
	\label{validd}
\end{definition}

In Lemmas \ref{lemma1}, \ref{lembc}, and \ref{lem3} we introduce three sorts of valid displacement which help us collect the voters. For convenience, here we only state the lemmas and  defer the proofs  to Section \ref{validdisp}. Figure \ref{displacements}, illustrates a summary of the valid displacements introduced in these lemmas. Note that these displacements are valid for any  $\beta \in [0,1]$.

\begin{restatable}{lemma}{firstlemma}
	\label{lemma1}
	Moving a voter $\voter_i$ from $x_i \in \regA$ to $0$ is a valid displacement.
\end{restatable}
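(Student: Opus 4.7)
The plan is to verify the two defining properties of a valid displacement (\Cref{validd}) separately for this move: that the expected winner remains $\canL$, and that the distortion $D(\canL) = \cost_\election(\canL)/\cost_\election(\canR)$ does not decrease.

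Since $\regA$ lies to the left of $\canL$, we have $x_i < 0$, and $\voter_i$ prefers $\canL$ both before and after the move (at $0$ she is co-located with $\canL$). Thus $\voter_i$ contributes nothing to $\shR$ in either case, so $\shR$ is unchanged. Her probability of voting for $\canL$, however, changes from
\[
f_\beta\bigl(-x_i,\; 1-x_i\bigr) \;=\; \left(\tfrac{1}{1-2x_i}\right)^{\beta}
\]
to $f_\beta(0,1) = 1$, which is a weak increase (strict when $x_i<0$). Hence $\shL$ does not decrease while $\shR$ is fixed, and so $\canL$ remains the candidate maximizing expected number of votes, i.e.\ the expected winner.

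For the distortion, observe that the move decreases both $d_{i,\canL}$ (from $-x_i$ to $0$) and $d_{i,\canR}$ (from $1-x_i$ to $1$) by the same amount $c \coloneqq -x_i \geq 0$. Hence the social costs become $A' = A - c$ and $B' = B - c$, where $A = \cost_\election(\canL)$ and $B = \cost_\election(\canR)$. Since $\canR$ is optimal, $A \geq B$, and a short algebraic manipulation gives
\[
\frac{A-c}{B-c} \;-\; \frac{A}{B} \;=\; \frac{c\,(A-B)}{B\,(B-c)} \;\geq\; 0,
\]
provided $B > c$. This last inequality is automatic because $\voter_i$'s own contribution to $\cost_\election(\canR)$ already satisfies $B \geq d_{i,\canR} = 1-x_i > -x_i = c$.

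I expect the only real subtlety to be this well-definedness check $B > c$: without it the ratio identity above is vacuous, and so the argument genuinely relies on $\voter_i$'s own contribution to $B$ exceeding the shift $c$. Everything else is a direct computation from the definitions of $f_\beta$, the expected winner, and the distortion, so once these two inequalities are in hand the lemma follows immediately.
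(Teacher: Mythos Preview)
Your proof is correct and follows essentially the same approach as the paper: first show $\shL$ weakly increases while $\shR$ is unchanged, then show that subtracting the common amount $-x_i$ from both social costs can only increase a ratio that is already at least $1$. The paper packages the second step via Observation~\ref{lem:frac} rather than the explicit identity $(A-c)/(B-c) - A/B = c(A-B)/(B(B-c))$, but the content is identical; your explicit verification that $B > c$ is a nice addition that the paper leaves implicit.
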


\begin{restatable}{lemma}{secondlemma}
	\label{lembc}
	Consider voters $\voter_i$ and $\voter_j$  respectively at $x_i \in \regB$ and $x_j \in \regC$. Then,
	\begin{itemize}
		\item If $d_{i,\canL} \leq d_{j,\canR}$, moving $\voter_i$ to $x_i+ x_j-1/2$ and $\voter_j$ to $1/2$ is a valid displacement. 
		\item If $d_{i,\canL} > d_{j,\canR}$, moving $\voter_i$ to $x_i - 1 + x_j$ and $\voter_j$ to $1$ is a valid displacement.
	\end{itemize}
\end{restatable}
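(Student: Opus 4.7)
The plan is to verify the two requirements of Definition \ref{validd} for each case: that the expected winner remains $\canL$, and that the distortion does not decrease. The displacements are designed so that both $\cost_\election(\canL)$ and $\cost_\election(\canR)$ are preserved exactly. Indeed, in Case 1, $d_{i,\canL}$ increases by $x_j - 1/2$ while $d_{j,\canL}$ decreases by exactly the same amount, and the analogous cancellation holds for the distances to $\canR$; Case 2 is similar, with the two shifts being $x_j - 1$ and $1 - x_j$. Consequently $D(\canL) = \cost_\election(\canL)/\cost_\election(\canR)$ is unchanged, and the entire task reduces to showing that $\canL$ remains the expected winner, i.e., that $\shL - \shR$ does not decrease.

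Recall that on the line with candidates at $0$ and $1$, a voter at $x \in \regB$ contributes $(1-2x)^\beta$ to $\shL$ and a voter at $x \in \regC$ contributes $(2x-1)^\beta$ to $\shR$. In Case 1, $\voter_j$ is sent to the midpoint (contribution $0$) and $\voter_i$ remains in $\regB$. Writing $u = 2 - 2x_i - 2x_j$ and $v = 2x_j - 1$, both nonnegative by the hypothesis $d_{i,\canL} \leq d_{j,\canR}$, the required inequality $\Delta(\shL - \shR) \geq 0$ reduces to $u^\beta + v^\beta \geq (u+v)^\beta$, which is exactly the subadditivity of $t \mapsto t^\beta$ for $\beta \in [0,1]$.

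In Case 2, $\voter_j$ is sent to $\canR$ (contribution $1$) and $\voter_i$ remains in $\regB$. With $u = 3 - 2x_i - 2x_j$ and $w = 2x_j - 1$, both in $[0,1]$ with $u + w = 2 - 2x_i \geq 1$, the inequality becomes $u^\beta + w^\beta \geq (u+w-1)^\beta + 1$. Let $a = u+w-1 \in [0,1]$; the hypothesis $d_{i,\canL} > d_{j,\canR}$ forces $u \leq 1$ and hence $a \leq w$. Concavity of $g(t) = t^\beta$ on $[0,1]$ (decreasing marginal returns) then gives $g(a + (1-w)) - g(a) \geq g(w + (1-w)) - g(w)$, i.e., $g(u) - g(a) \geq g(1) - g(w)$, which rearranges to the desired inequality. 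The main obstacle is this Case 2 step: unlike Case 1 it is not bare subadditivity, and the right concavity comparison, matching increments of the same length $1-w$ based at $a = u+w-1$ and at $w$, has to be spotted.
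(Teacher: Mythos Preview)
Your proof is correct and follows essentially the same approach as the paper: both arguments observe that the two voters are shifted by equal amounts in opposite directions (so $\cost_\election(\canL)$ and $\cost_\election(\canR)$ are preserved and $D(\canL)$ is unchanged), and then verify that $\shL-\shR$ does not decrease. For Case~1 both you and the paper reduce to subadditivity of $t\mapsto t^\beta$.

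Your handling of Case~2 is in fact cleaner than the paper's. The paper writes the chain
\[
\big((1-2x_i)-(2x_j-2)\big)^\beta \geq (1-2x_i)^\beta-(2x_j-2)^\beta \geq 1^\beta-(2x_j-1)^\beta,
\]
which is awkward because $2x_j-2<0$, so the meaning of $(2x_j-2)^\beta$ is unclear and the first inequality, read literally via subadditivity, points the wrong way. Your formulation avoids this entirely: with $u=3-2x_i-2x_j$, $w=2x_j-1$, and $a=u+w-1=1-2x_i$, the Case~2 hypothesis $d_{i,\canL}>d_{j,\canR}$ gives $u\le 1$, hence $a\le w$, and then the concavity step $g(a+(1-w))-g(a)\ge g(1)-g(w)$ yields exactly $u^\beta+w^\beta\ge a^\beta+1$. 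This is the right way to see why Case~2 genuinely needs concavity (equal-length increments at different base points) rather than bare subadditivity.
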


\begin{restatable}{lemma}{thirdlemma}
	\label{lem3}
	Consider voters $\voter_i, \voter_j$, where $x_i,x_j \in \regB$ or $x_i,x_j \in \regD$. Then moving both the voters to $(x_i+ x_j)/2$ is a valid displacement.
\end{restatable}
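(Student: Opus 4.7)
The plan is to check the two conditions in Definition \ref{validd} for the displacement that sends both $\voter_i$ and $\voter_j$ to the midpoint $m = (x_i+x_j)/2$: that $D(\canL)$ is unchanged (so the distortion of the expected winner does not decrease), and that $\shL \ge \shR$ is preserved (so $\canL$ remains the expected winner).

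For the distortion part, I would first observe that since both voters lie in the same convex region ($\regB$ or $\regD$), so does $m$, and within each of these regions the distance to either candidate is an affine function of position. Hence $d_{i,\canL}+d_{j,\canL}$ and $d_{i,\canR}+d_{j,\canR}$ are both invariant under the displacement, so $\cost_\election(\canL)$, $\cost_\election(\canR)$, and therefore $D(\canL)$, are unaffected.

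For the expected-winner part, the key observation concerns the curvature of the voting probability restricted to each region. On $\regB = [0,1/2]$, a voter at $x$ prefers $\canL$ and contributes to $\shL$ (and not to $\shR$) with probability
\[
g(x) := f_\beta(x,1-x) = (1-2x)^\beta,
\]
while on $\regD = [1,\infty)$, a voter at $x$ prefers $\canR$ and contributes to $\shR$ (and not to $\shL$) with probability
\[
h(x) := f_\beta(x-1,x) = (2x-1)^{-\beta}.
\]
A direct second-derivative computation yields
\[
g''(x) = 4\beta(\beta-1)(1-2x)^{\beta-2}, \qquad h''(x) = 4\beta(\beta+1)(2x-1)^{-\beta-2},
\]
so since $\beta \in [0,1]$ gives $\beta(\beta-1) \le 0$ and $\beta(\beta+1) \ge 0$, $g$ is concave on $\regB$ and $h$ is convex on $\regD$. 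In Case 1 ($x_i,x_j \in \regB$), midpoint concavity gives $g(x_i)+g(x_j) \le 2 g(m)$, so $\shL$ weakly increases while $\shR$ is unchanged; in Case 2 ($x_i,x_j \in \regD$), midpoint convexity gives $h(x_i)+h(x_j) \ge 2 h(m)$, so $\shR$ weakly decreases while $\shL$ is untouched. Either way $\shL - \shR$ weakly increases, so $\canL$ is still the expected winner.

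The only real checking required is the sign analysis of these two second derivatives: it is exactly the bounds $\beta(\beta-1)\le 0$ and $\beta(\beta+1)\ge 0$ on $[0,1]$ that encode, in a uniform-in-$\beta$ way, the indifference-based abstention (concavity of $g$ on $\regB$) and the alienation-based abstention (convexity of $h$ on $\regD$). Degenerate cases—$\beta=0$ (where $g,h\equiv 1$), or the endpoint $x = 1/2$ of $\regB$ where $g$ vanishes—follow from continuity and pose no further obstacle.
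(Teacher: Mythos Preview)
Your proposal is correct and follows essentially the same approach as the paper: both arguments observe that moving two voters symmetrically to their midpoint leaves the social costs (and hence $D(\canL)$) unchanged, and then use the concavity of $(1-2x)^\beta$ on $\regB$ and the convexity of $(2x-1)^{-\beta}$ on $\regD$ to show that $\shL$ does not decrease (resp.\ $\shR$ does not increase). Your presentation is slightly cleaner in that you compute the second derivatives explicitly and invoke midpoint concavity/convexity directly, whereas the paper phrases the same step via the increments $\Delta_i,\Delta_j$ and an appeal to a figure; substantively the two proofs are identical.
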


\begin{figure}[h]
	\begin{center}
		\begin{tikzpicture}
		\node[draw]{\includegraphics[scale=1.2]{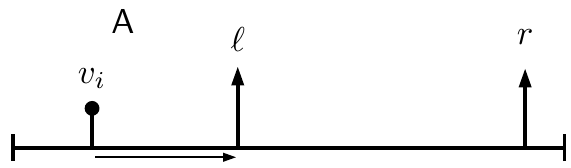}};
		\end{tikzpicture}
		
		\vspace{1mm}
		\begin{tikzpicture}
		\node[draw]{\includegraphics[scale=1.2]{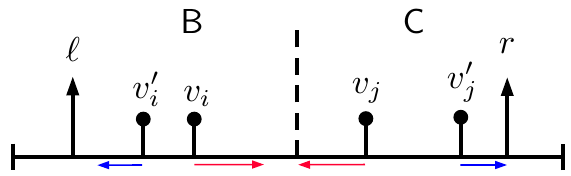}};
		\end{tikzpicture}
		
		\vspace{1mm}
		\begin{tikzpicture}
		\node[draw]{\includegraphics[scale=1.2]{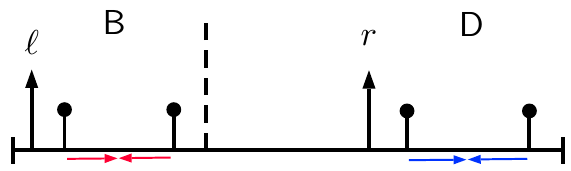}};
		\end{tikzpicture}
		
		\caption{Valid displacements introduced in Lemmas 
			\ref{lemma1}, \ref{lembc}, \ref{lem3}.}
		\label{displacements}
	\end{center}
\end{figure}
We also state two simple and natural Corollaries of Lemmas \ref{lembc}, and \ref{lem3}.
	
	\begin{corollary}[of Lemma \ref{lembc}]
		\label{ColBC}
		We can move each voter in region $\regC$ to either $1$ or $1/2$ by a sequence of valid displacements.
	\end{corollary}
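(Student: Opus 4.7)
The plan is a direct induction, repeatedly applying Lemma \ref{lembc}. I fix an anchor voter $\voter_i \in \regB$; such a voter exists in the regime of interest, because $\canL$ is the expected winner and so by Lemma \ref{lemma1} one may first send every voter of $\regA$ to $0 \in \regB$, leaving $\regB$ non-empty. For any remaining voter $\voter_j \in \regC$, I apply Lemma \ref{lembc} to the pair $(\voter_i, \voter_j)$, which is a valid displacement that sends $\voter_j$ to either $1/2$ or $1$ depending on the sign of $d_{i,\canL} - d_{j,\canR}$. Iterating over the voters of $\regC$ one at a time then yields the desired sequence of valid displacements.

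The crucial step is verifying the invariant that $\voter_i$ remains in $\regB$ after each application, so that the next iteration of Lemma \ref{lembc} is still legal. In the first branch of Lemma \ref{lembc}, $\voter_i$ moves to $x_i + x_j - 1/2$: the branching hypothesis $d_{i,\canL} \leq d_{j,\canR}$ is exactly $x_i + x_j \leq 1$, so the new position lies in $[x_i, 1/2] \subseteq \regB$. In the second branch, $\voter_i$ moves to $x_i + x_j - 1$; the condition $d_{i,\canL} > d_{j,\canR}$ rewrites as $x_i + x_j \geq 1$, giving the new position $\geq 0$, while $x_i \leq 1/2$ and $x_j \leq 1$ force it $\leq 1/2$. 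In both branches $\voter_i$ stays in $\regB$ and is available for the next iteration.

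The main (and essentially only) subtle point is exactly this invariant: one might initially worry that the anchor $\regB$-voter drifts out of $\regB$ after several displacements and the inductive construction collapses. The case analysis above shows that the branching of Lemma \ref{lembc} is tailored so that the anchor never leaves $\regB$. After at most $|\{\voter_j \in \voters_\election : x_j \in \regC\}|$ applications, every voter originally in $\regC$ has been relocated to $\{1/2, 1\}$ via a sequence of valid displacements, completing the proof.
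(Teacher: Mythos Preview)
Your proof is correct and follows essentially the same approach as the paper: iterate over the voters of $\regC$ and apply Lemma~\ref{lembc} with a partner in $\regB$. The paper's proof is terser---it simply notes that since $\canL$ is the expected winner there is always some voter in $\regB$ to pair with---whereas you fix a single anchor and explicitly verify that it remains in $\regB$ after each displacement; this extra care is sound and arguably makes the iteration more transparent.
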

	
	\begin{proof}
		Consider an arbitrary voter $\voter_j \in \left(1/2,1\right)$. Since $\canL$ is the expected winner, there exists at least one voter, say $\voter_i$, in region $\regB$. By Lemma \ref{lembc}, if $d_{i,\canL} \leq d_{j,\canR}$, we can move $\voter_j$ to $1/2$ and if $d_{i,\canL} > d_{j,\canR}$, we can move $v_j$ to $1$. 
	\end{proof}
	
	\begin{corollary}[of Lemma \ref{lem3}]
		\label{ColBtop}
		We can collect all the voters of region $\regB$ at some point $x \in \regB$ via  a sequence of valid displacements.
		Furthermore, we can collect all the voters of region $\regD$ at some point $x' \in \regD$ via a sequence of valid displacements.
	\end{corollary}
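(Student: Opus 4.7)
The plan is to derive Corollary~\ref{ColBtop} from Lemma~\ref{lem3} by iterated application together with a simple potential argument. I would focus on $\regB$; the argument for $\regD$ is verbatim identical. Set $\voters_\regB := \{\voter_i \in \voters_\election : x_i \in \regB\}$ and write $n = |\voters_\regB|$. The claim is trivial for $n \le 1$, and a single application of Lemma~\ref{lem3} handles $n = 2$; so assume $n \ge 3$.

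At each step I would apply Lemma~\ref{lem3} to the pair of voters in $\voters_\regB$ that are currently farthest apart on the line. By Lemma~\ref{lem3} this is a valid displacement, so the expected winner $\canL$ is preserved and the distortion of the expected winner is non-decreasing throughout the process. To prove convergence I would track the potential $V := \sum_{\voter_i \in \voters_\regB} (x_i - \bar x)^2$, where $\bar x$ is the mean of the initial positions in $\regB$. A direct check (using that the sum of two numbers equals twice their midpoint) shows that $\bar x$ is preserved by every application of Lemma~\ref{lem3}, and that $V$ drops by exactly $(x_i - x_j)^2/2$ whenever the lemma is applied to voters at $x_i$ and $x_j$.

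Because the chosen pair realizes the current diameter $R$ of $\voters_\regB$, and because the standard inequality $V \le n R^2/4$ holds, each step shrinks $V$ by at least $R^2/2 \ge 2V/n$, giving the geometric contraction $V_{k+1} \le (1 - 2/n)\, V_k$. Hence $V_k \to 0$ and all voters in $\regB$ converge to $\bar x$, which lies in $\regB$ by convexity of that interval. Continuity of $\shL$, $\shR$, and of $D(\omega_{\election,\beta})$ as functions of the voter positions (for fixed $\beta$) then lets me pass to the limit: the limiting configuration has all voters of $\voters_\regB$ at a single point $\bar x \in \regB$, retains $\canL$ as an expected winner (since $\shL \ge \shR$ passes to the limit), and has distortion at least the original value of $D(\omega_{\election,\beta})$. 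The identical argument on $\regD$ delivers the second claim.

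The main obstacle I expect is that Lemma~\ref{lem3} alone cannot literally equalize $n \ge 3$ positions in a finite number of steps: every position reachable by finitely many pairwise midpoint merges is a dyadic rational combination of the originals, so exact coincidence of three or more voters at a common point generically requires a limit. Consequently the corollary is inherently a limiting statement, and the continuity of $\shL$, $\shR$, and of the distortion in the voter positions is what legitimately lets me interpret the converging iteration as ``a sequence of valid displacements'' reaching a single-point configuration.
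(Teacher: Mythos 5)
Your proposal follows the same route as the paper's own proof: apply Lemma~\ref{lem3} repeatedly to the currently farthest pair of voters in the region, and observe that the diameter of the voter set shrinks. The paper justifies termination in one line (``this procedure can be applied until all the voters gather at one point''), and your extra care here is warranted: as you note, each midpoint merge preserves the sum of the positions and keeps every position a dyadic rational combination of the originals, so for three or more voters in general position no finite number of merges can bring them all to the common limit $\bar x$. Your fix --- the potential $V=\sum_{\voter_i}(x_i-\bar x)^2$ drops by $(x_i-x_j)^2/2$ per step and contracts geometrically via $V\leq nR^2/4$, all positions converge to the preserved mean $\bar x\in\regB$, and the conclusion passes to the limit because $\shL$, $\shR$ and the distortion are continuous in the voter locations while the constraints ``$\canL$ remains an expected winner'' and ``distortion does not decrease'' are closed conditions --- is correct and supplies exactly what the paper's one-line argument omits. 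If you prefer a finite argument, note that a single displacement sending all voters of $\regB$ (resp.\ $\regD$) to their mean is already valid: for $\beta\in[0,1]$ the participation probability $(1-2x)^{\beta}$ is concave on $\regB$ and $(2x-1)^{-\beta}$ is convex on $\regD$, so Jensen's inequality gives that $\shL$ does not decrease and $\shR$ does not increase, while both social costs are unchanged because the sum of the positions is preserved; this collects each region in one step and avoids the limiting argument altogether.
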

	
	\begin{proof}
		By applying Lemma \ref{lem3} iteratively to the furthest voters, the maximum distance between the voters in each region decreases. This procedure can  be applied until all the voters gather at one point.
	\end{proof}
	
	Now, we are ready to prove Theorem \ref{theorem1}.
	\begin{proof}[Proof of Theorem \ref{theorem1}.]
	First, we prove that by Lemma \ref{lemma1}  and Corollaries \ref{ColBC}, and \ref{ColBtop}, every election $\election$ can be reduced to an election $\election'$, such that 
	\begin{itemize}
		\item $\election$ and $\election'$ have the same expected winner. 
		\item $\done\left(\omega_{\election,\beta}\right) \leq \done\left(\omega_{\election',\beta}\right)$. 
		\item All the agents in $\election'$ are located at two points $x_b \in \regB$ and  $x_d \in \regD$.
	\end{itemize}
	Consider an arbitrary election $\election$. Using Lemma \ref{lemma1}, we move all the voters in region $\regA$ to $0$. Afterwards, using  Corollary \ref{ColBC} we move each voter in region $\regC$ to one of the points $1/2$ or $1$. At this point, all the voters belong to one of regions $\regB$ or $\regD$ (we suppose that the voters located in the borderlines belong to both regions). Finally, using Corollary \ref{ColBtop}, we collect all the voters in regions $\regB$ and $\regD$ at some points $x_b \in \regB,$ $x_d \in \regD$. 
	
	Finally, let $\election$ be an arbitrary election such that $\done(\omega_{\election,\beta}) = \done_\beta^*$. 
	 Applying the above reduction on $\election$, yields an election $\election^{*}$ with  $\done(\omega_{\election^*,\beta}) = \done_\beta^*$, and the desired structure.
\end{proof}       



According to \cref{theorem1}, for any $\beta \in [0,1]$,  we can establish an election $\election^*$ with the maximum distortion, and the following structure (see Figure \ref{fig:final}): the interior of regions $\regA$ and $\regC$ contain no voter. All the voters are located at two points   $x_b \in \regB$ and $x_d \in \regD$. Note that, the maximum distortion value and the location of $x_b$ and $x_d$ in the worst-case scenario depends on the value of $\beta$. 

\begin{figure}[h]
	\begin{center}
		\includegraphics{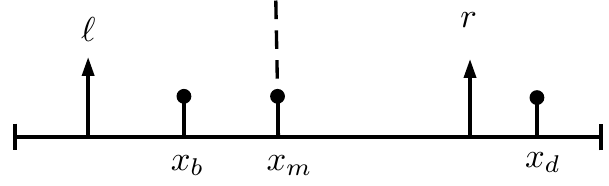}
	\end{center}
	\caption{For any $\beta \in [0,1]$, there is an election $\election^*$ with $\done(\omega_{\election^*,\beta}) = \done_\beta^*$, and the above structure.}
	\label{fig:final}
\end{figure}

\subsection{A Tight Upper Bound on $\done^*_\beta$} \label{ubond}
We now evaluate  $\done^*_\beta$ for different values of $0\leq \beta \leq 1$.
Let us start by the boundary case $\beta = 0$.  For $\beta = 0$, the probability that a voter casts a vote is $1,$
independent of her location. It is proved that for this case, we have  $\done^*_\beta=3$ \cite{anshelevich2015approximating}. Indeed, the same example we provided in Figure \ref{3app} is the scenario with the highest distortion for $\beta=0$.   


Now, consider $\beta>0$, and let $\election^*$ be the election that maximizes $\done(\omega_{\election^*,\beta})$. As discussed in the previous section, we can assume without loss of generality that the voters in $\election^*$ are located at two points, namely, $x_b \in \regB$ and $x_d \in \regD$. Suppose that $q_b$ voters are at $x_b$ and $q_d$ voters are at $x_d$. We have: 
$$\shL = \left(1-2x_b \right)^\beta q_b \quad \mbox{and}\quad \shR = \left(\frac{1}{(2x_d-1)^\beta}\right)q_d.$$
Since $\canL$ is the expected winner, we have 

$$
(1-2x_b)^\beta q_b \geq \left(\frac{1}{(2x_d-1)^\beta}\right)q_d.
$$ 
On the other hand, we have 

$$\cost_{\election^*}(\canL) = q_bx_b + q_dx_d ,$$
and 

$$\cost_{\election^*}(\canR) = q_b(1-x_b) + q_d(x_d - 1).$$
Thus, 

\begin{align*}
\done(\canL)     &= \frac{\cost_{\election^*}(\canL)}{\cost_{\election^*}(\canR)}\\
&= \frac{q_bx_b + q_dx_d}{q_b(1-x_b) + q_d(x_d - 1)}\\
&= \frac{q_b x_b  + (n-q_b) x_d }{q_b(1-x_b) + (n-q_b)(x_d - 1)} 
\end{align*}
Therefore, in order to find the maximum distortion value, we need to solve the following  optimization problem:

\begin{maxi}
	{}{ \frac{q_b x_b  + (n-q_b) x_d }{q_b(1-x_b) + (n-q_b)(x_d - 1)}}
	{}{}
	\addConstraint{(1-2x_b)^\beta q_b }{\geq \frac{n-q_b}{(2x_d-1)^\beta}}
	\addConstraint{0 \leq q_b}{\leq 1} 
	\addConstraint{0 \leq x_b}{ \leq 1/2}
	\addConstraint{1 }{\leq x_d.}   
	\label{cp2}
\end{maxi}
Now consider another boundary case: $\beta = 1$. For $\beta = 1$
the answer to the above optimization  problem is $\frac{(1+\sqrt2)^2}{1+2\sqrt2} \simeq 1.522$, which can be obtained by choosing $q_b=\frac{n}{2+\sqrt2}$,  $x_b=0$, and $x_d = \frac{2+\sqrt2}{2}$. A graphical representation of this construction is shown in Figure \ref{fig:ans}.


\begin{figure}[h]
	\begin{center}
		\includegraphics{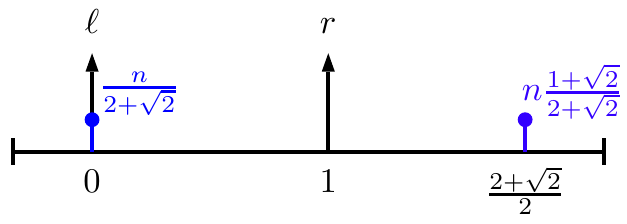}
	\end{center}
	\caption{A tight example for $\beta = 1$.}
	\label{fig:ans}
\end{figure}

In general for $0 < \beta < 1$, the maximum distortion value equals the answer of Optimization Problem (\ref{cp2}). In Figure \ref{fig:beta}, we show the answer of this program for different values of $\beta$. Interestingly, with $\beta$ increasing from $0$ to $1$, $\done_\beta^*$  initially decreases and then increases. As illustrated in Figure \ref{fig:beta}, it can be seen that the minimum possible value for $\done^*_\beta$ is $\simeq \sqrt{2}$ for $\beta \simeq 0.705$. 

\begin{figure}[h]
	\begin{center}
		\includegraphics[scale=0.2]{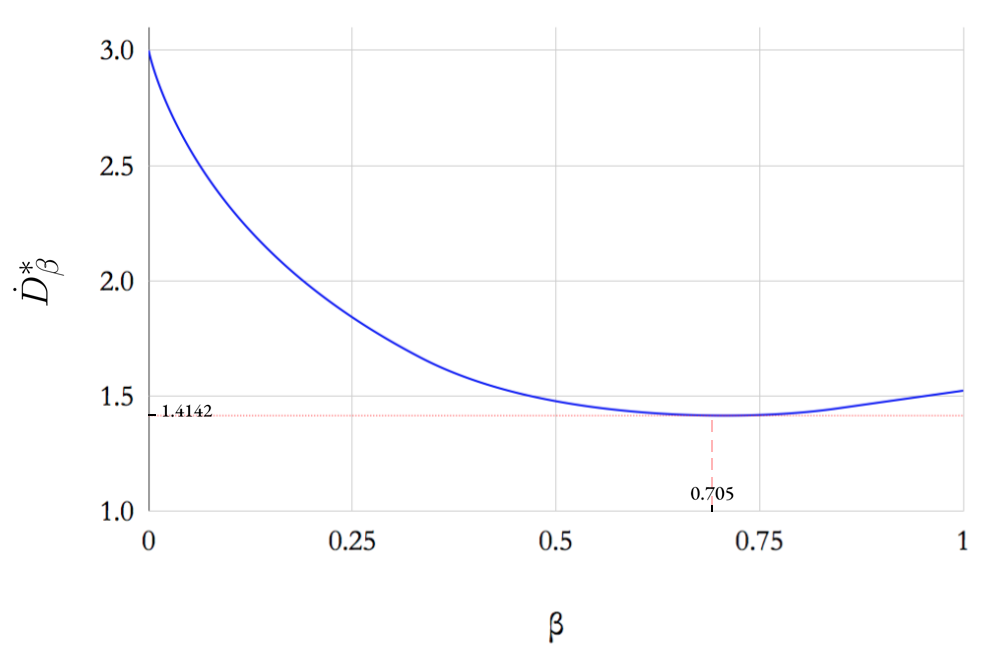}
	\end{center}
	\caption{Worst-case distortion for $0\! \leq\! \beta\! \leq\! 1$.}
	\label{fig:beta}
\end{figure}

\subsection{Valid Displacements} \label{validdisp}
In this section, we prove Lemmas \ref{lemma1}, \ref{lembc}, and \ref{lem3}.
One important tool to prove these lemmas is  Observation \ref{lem:frac}.

\begin{observation}
	Let $a,b,c,d>0$ be four positive constants. We have :\footnote{For the second inequalities, we assume  $d<b$.}
	\begin{itemize}
		\item If ${a \over b} > \frac{c}{d}$ then ${a+c \over b+d} < {a \over b}$ and ${a-c \over b-d} > {a \over b} $.
		\item  If ${a \over b} < \frac{c}{d}$ then ${a+c \over b+d} > {a \over b}$ and ${a-c \over b-d} < {a \over b} $.
	\end{itemize}
	\label{lem:frac}
\end{observation}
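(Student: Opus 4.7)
The plan is to derive all four inequalities by cross-multiplication, reducing each claim to the fact that under the standing positivity assumption, $a/b > c/d$ is equivalent to $ad > bc$ (and $a/b < c/d$ is equivalent to $ad < bc$). So each of the four assertions should be a one-line rearrangement of its own hypothesis.

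First, I would handle the sum case. To show $(a+c)/(b+d) < a/b$ under $a/b > c/d$, I would multiply both sides by the positive quantity $b(b+d)$; this preserves the direction and leaves $b(a+c) < a(b+d)$, which cancels to $bc < ad$, i.e., the hypothesis. The reverse hypothesis $a/b < c/d$ gives the reverse inequality in exactly the same way, and the manipulations do not depend on any further relation between $a,b,c,d$ beyond positivity of $b$ and $b+d$.

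Next, for the difference case, the footnote assumption $d < b$ becomes essential: it guarantees $b-d > 0$, so multiplying both sides of $(a-c)/(b-d) > a/b$ by $b(b-d)$ still preserves the direction of the inequality. The resulting relation $b(a-c) > a(b-d)$ simplifies once more to $ad > bc$, which is the hypothesis. The companion subcase is symmetric.

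The only delicate point I foresee is keeping track of the sign of $b-d$ in the difference case, since without the side condition $d<b$ a silent sign flip would invalidate the cross-multiplication; this is precisely what the footnote is guarding against. Apart from that bookkeeping, the proof is purely routine algebra and I expect no substantive obstacle.
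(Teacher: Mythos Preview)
Your proof is correct. The paper states this as an observation without proof, so there is no argument to compare against; your cross-multiplication approach is the standard way to verify such mediant-type inequalities and works cleanly in all four cases.
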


\firstlemma*
\begin{proof}
	Initially,
	$\voter_i$ votes for $\canL$ with probability $\left(\frac{1}{1-2x_i}\right)^\beta$. After moving $\voter_i$ to $0$, she votes for $\canL$ with probability
	$1$. Therefore,  if we move $\voter_i$ to $0$, the value of  $\shL$ does not decrease, and the expected winner does not change.
	Furthermore, by this movement both $\cost_{\election}\left(\canL\right)$ and $\cost_{\election}\left(\canR\right)$ are decreased by $-x_i$. Let $c$ and $c'$ be the contribution of  $\voter_{-i}$ (that is, all voters except $v_i$) to the social cost of $\canL$ and $\canR$ respectively. Before moving $\voter_i$ to $0$, we have
	
	$$\done(\ell) = {c -x_i \over c' + {1-x_i}},$$ 
	and after the movement we have
	\begin{align}
	\done(\ell) &=  {c  \over c' + {1}} \nonumber \\
	&= {(c - x_i )- (-x_i) \over (c' + 1-x_i) - (-x_i)},\label{nmm}
	\end{align}
	 By applying  Observation \ref{lem:frac} on Equation \eqref{nmm}, we have
	$$
	 {c -x_i \over c' + {1-x_i}} \leq {(c - x_i )- (-x_i) \over (c' + 1-x_i) - (-x_i)}  
	$$
	which implies that moving $\voter_i$ to $0$ is a valid displacement.
\end{proof}

\secondlemma*

\begin{proof}
	Initially, $\voter_i$ votes for $\canL$ with probability $\left(1-2x_i\right)^\beta$, and  $\voter_j$ votes for $\canR$ with probability $\left(2x_j-1\right)^\beta$. Since these movements do not change the regions where the voters belong, after the movement they vote for the same candidate but with different probabilities. Let $\Delta_i$ be the difference between the contribution of $\voter_i$ to $\shL$, before and after the displacement. Similarly, let $\Delta_j$ be the difference between the contribution of $\voter_j$ to $\shR$ before and after the movement. We consider two cases.
	
	\vspace{0.2cm}
	\textbf{Case I} ($d_{i,\canL} \leq d_{j,\canR}$): if we move $\voter_i$ to $x_i + x_j - 1/2$ and $\voter_j$ to $1/2$,  $\voter_i$ votes for $\canL$ with probability
	$\left(2-2x_i-2x_j\right)^\beta$
	and $\voter_j$ votes for $\canR$ with probability 
	$0$. Thus, 
	we have
	$$\Delta_i = \left(2 - 2x_j-2x_i \right)^\beta- \left(1-2x_i\right)^\beta
	\qquad \mbox{and} \qquad
	\Delta_j =0^\beta - \left(2x_j-1\right)^\beta.$$
	Since $\beta \leq 1$, by straightforward calculus we have:
	\begin{align*}
	\left((1 - 2x_i) - (2x_j -1 ) \right)^\beta &\geq  \left(1-2x_i\right)^\beta - \left(2x_j - 1\right)^\beta\\
	\left((1 - 2x_i) - (2x_j -1 ) \right)^\beta - \left(1-2x_i\right)^\beta &\geq  - \left(2x_j - 1\right)^\beta\\
	\Delta_i &\geq \Delta_j.
	\end{align*}
	
	\textbf{Case II} ($d_{i,\canL} > d_{j,\canR}$): if we move $\voter_i$ to $x_i + x_j - 1$ and $\voter_j$ to $1$, after the displacement, $\voter_i$ votes for $\canL$ with probability
	$\left(3-2x_i-2x_j\right)^\beta,$
	and $\voter_j$ votes for $\canR$ with probability 
	$1$.
	Therefore, we have
	$$\Delta_i = \left(3 - 2x_i-2x_j \right)^\beta- \left(1-2x_i\right)^\beta 
	\qquad \mbox{and} \qquad 
	\Delta_j =1^\beta - \left(2x_j-1\right)^\beta.$$
	Since $\beta \leq 1$, we have
	\begin{align*}
	\left((1 - 2x_i) - (2x_j -2 ) \right)^\beta &\geq  \left(1-2x_i\right)^\beta - \left(2x_j - 2\right)^\beta\\
	\left((1 - 2x_i) - (2x_j -2 ) \right)^\beta - \left(1-2x_i\right)^\beta &\geq  - \left(2x_j - 2\right)^\beta\\ 	\left((1 - 2x_i) - (2x_j -2 ) \right)^\beta - \left(1-2x_i\right)^\beta 
	& \geq 1^\beta - (2x_j - 1)^\beta\\
	\Delta_i &\geq \Delta_j.
	\end{align*}
	Hence the expected winner does not change.
	In addition, since we move two voters in Regions $\regB$ and $\regC$ equally in the opposite directions in both cases, the distortion value of each candidate remains unchanged.
\end{proof}

\thirdlemma*

\begin{proof}
	Let $\varepsilon = |x_i-x_j|/2$. Recall the definition of $\Delta_i$ and $\Delta_j$ from the proof of Lemma \ref{lembc}. For the case of $x_i,x_j \in \regD$, we have:
	
	\begin{align*}
	\Delta_i &= \left(\frac{1}{2x_i+2\varepsilon-1}\right)^\beta - \left(\frac{1}{2x_i-1}\right)^\beta  ,
	\end{align*}
	and
	\begin{align*}
	\Delta_j &= \left(\frac{1}{2x_j-2\varepsilon-1}\right)^\beta - \left(\frac{1}{2x_j-1}\right)^\beta .
	\end{align*}
	Thus, we have  
	$$
	\Delta_i+ \Delta_j = \left(\frac{1}{2x_i+2\varepsilon-1}\right)^\beta - \left(\frac{1}{2x_i-1}\right)^\beta +\left(\frac{1}{2x_j-2\varepsilon-1}\right)^\beta - \left(\frac{1}{2x_j-1}\right)^\beta.
	\label{sumdelta}
	$$
	Since $x_j>x_i$,  these two inequalities imply $\Delta_i + \Delta_j \leq 0$ (see Figure \ref{decreasing-convex}). Thus, value of $\shR$ does not increase and the expected winner does not change.	
	
	Similarly For the case of $x_i,x_j \in \regB$, we have:
	\begin{align*}
		\Delta_i &= \left(1-2x_i-2\varepsilon\right)^\beta - \left(1-2x_i\right)^\beta  ,
	\end{align*}
	and
	\begin{align*}
		\Delta_j &= \left(1-2x_j+2\varepsilon\right)^\beta - \left(1-2x_j\right)^\beta .
	\end{align*}
	Thus, we have  
	$$
	\Delta_i+ \Delta_j = \left(1-2x_i-2\varepsilon\right)^\beta - \left(1-2x_i\right)^\beta + \left(1-2x_j+2\varepsilon\right)^\beta - \left(1-2x_j\right)^\beta .
	$$
	Note that since ${f'}_\beta\left(x\right) = \left(1-2x_i\right)^\beta$ is a decreasing and concave function, we have $$\frac{d\left({f'}_\beta\right)}{dx} \leq 0,$$ and $$\frac{d^2\left({f'}_\beta\right)}{dx^2} \leq 0.$$
	Since $x_j>x_i$,  these two inequalities imply $\Delta_i + \Delta_j \geq 0$. Thus,  value of $\shL$ does not decrease and the expected winner does not change.
	In addition, since the voters move in the opposite directions and by the same distance, the distortion value of the candidates do not change. Therefore, this modification is a valid displacement.
	\begin{figure}
		\centerline{\includegraphics[scale=0.35]{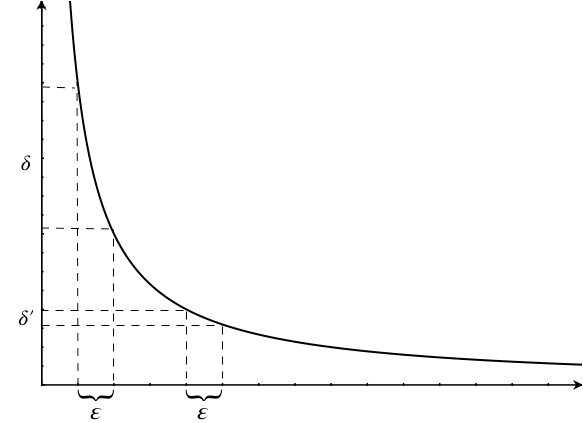}}
		\caption{For every decreasing convex function $g$ and $x_i < x_j$, we have     $g\left(x_i + \varepsilon\right) - g\left(x_i\right)  \leq g\left(x_j\right)-g\left(x_j-\varepsilon\right)  \leq 0$}
		\label{decreasing-convex}
	\end{figure}
\end{proof}
\section{Expected Distortion}
\label{ED}
Recall that in our second approach, we define the distortion of an election as the expected distortion of the winner, where the expectation is taken over the random behavior of the voters. Our main result in this Section is Theorem \ref{bbnd}.

\begin{restatable}{theorem}{secondthm}
	\label{bbnd}
	For any $\alpha>0$, value of $\dtwo_\beta(\election)$ for every election $\election$ whose candidates receive at least
		\[\phi(\alpha) = \frac{(\alpha + 1)^3}{\alpha^2 (\alpha - \sqrt{\alpha + 1})^2}\]
	expected number of votes is at most $(1+2\alpha)\done^*_{\beta}$.
\end{restatable}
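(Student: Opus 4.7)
The proof splits on whether the optimal candidate coincides with the expected winner. Without loss of generality take $\canL$ to be the expected winner, so $\shL \geq \shR$. If $\canR$ is optimal, the bound is immediate and does not even use the assumption on vote counts: $D(\canR) = 1$, Theorem~\ref{theorem1} gives $D(\canL) \leq \done^*_\beta$, and using $\probL + \probR = 1$,
\[
\dtwo_\beta(\election) \;=\; \probL\,D(\canL) + \probR \;\leq\; \done^*_\beta\,\probL + \done^*_\beta\,\probR \;=\; \done^*_\beta \;\leq\; (1+2\alpha)\done^*_\beta.
\]

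The substantive case is when $\canL$ is also optimal, so $D(\canL) = 1$ and
\[
\dtwo_\beta(\election) \;=\; 1 + \probR\bigl(D(\canR) - 1\bigr);
\]
it therefore suffices to prove $\probR \cdot D(\canR) \leq 2\alpha\,\done^*_\beta$. My plan is to control the two factors simultaneously. For $\probR$, I would apply a multiplicative Chernoff bound to the independent Bernoulli votes, splitting at a threshold between $\shR$ and $\shL$ via the union bound $\{N_R \geq N_L\} \subseteq \{N_R \geq T\} \cup \{N_L \leq T\}$. This yields a tail of the form $\probR \lesssim \exp\!\bigl(-\Theta\bigl((\shL-\shR)^2/\shL\bigr)\bigr)$ that decays rapidly as the expected-vote gap widens. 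For $D(\canR) = \cost_\election(\canR)/\cost_\election(\canL)$ I plan to exploit the voting formula $p_i = \bigl(|d_{i,\canL}-d_{i,\canR}|/(d_{i,\canL}+d_{i,\canR})\bigr)^{\beta}$ together with $\cost_\election(\canL) \leq \cost_\election(\canR)$ (optimality of $\canL$) and $\shL \geq \shR$ (expected-winner condition) to express $D(\canR)$ as a quantity controlled by $\shL$ and $\shR$.

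To make this concrete, I would mimic the reduction used in Theorem~\ref{theorem1} and collapse the configuration to voters at two points $x_b \in \regB$ and $x_d \in \regD$ with multiplicities $q_b, q_d$; the product $\probR\cdot D(\canR)$ then becomes a four-variable optimization subject to $\shL, \shR \geq \phi(\alpha)$ and the two constraints that $\canL$ is both optimal and the expected winner. The function $\phi(\alpha) = (\alpha+1)^3/[\alpha^2(\alpha - \sqrt{\alpha+1})^2]$ should emerge as the smallest expected-vote threshold at which the maximum of this program falls below $2\alpha\,\done^*_\beta$. The main obstacle is that the valid displacements of Section~\ref{validdisp} were engineered to weakly increase $\done(\omega)$, not the product $\probR\cdot D(\canR)$; one must either verify that the old moves remain adversarial for the new objective or introduce fresh displacements tailored to it. A secondary difficulty is tuning the Chernoff threshold between $\shR$ and $\shL$ sharply enough to recover the precise constants in $\phi(\alpha)$, which should amount to an explicit one-parameter optimization over the interpolating threshold.
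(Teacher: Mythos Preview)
Your easy case (expected winner $\neq$ optimal) is correct and matches the paper's opening reduction. The gap is in the hard case, where the sufficient condition you aim for, $\probR\cdot D(\canR) \leq 2\alpha\done^*_\beta$, is strictly stronger than what is true, so a plan built around it cannot close. Take (in your labelling) the tight instance of Section~\ref{ubond} reflected about the midpoint and perturbed by one extra voter so that $\canL$ is both optimal and expected winner with $\shL = \shR + o(1)$, while $D(\canR)$ is essentially $\done^*_\beta$. Scaling the population makes $\shL,\shR \geq \phi(\alpha)$ for any $\alpha$. Here the gap $\shL-\shR$ is negligible, so your Chernoff tail gives nothing and $\probR \approx \tfrac12$; hence $\probR\,D(\canR) \approx \tfrac12\done^*_\beta$, which for small $\alpha$ far exceeds $2\alpha\done^*_\beta$. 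No amount of concentration rescues this regime, because the exponent $(\shL-\shR)^2/\shL$ you rely on is zero.

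The paper's fix is a case split on the expected-vote ratio that you are missing. When the ratio is at most $1+\alpha$, the paper does \emph{not} bound the losing probability at all; instead it bounds the distortion of the non-optimal candidate directly via a trick absent from your plan (Lemma~\ref{lbnd}): add at most $\alpha\cdot\shL$ phantom voters at the non-optimal candidate's location, thereby flipping the expected winner and forcing that candidate's distortion to be at most $\done^*_\beta$ in the augmented election; a short calculation shows the added voters inflate the relevant social cost by at most a factor $1+2\alpha$, giving $\dtwo_\beta(\election) \leq (1+2\alpha)\done^*_\beta$ with no concentration and no lower bound on vote counts. Only in the complementary regime (ratio exceeding $1+\alpha$) does the paper invoke concentration, and there it uses Chebyshev rather than Chernoff; the specific form of $\phi(\alpha)$, including the $(\alpha-\sqrt{\alpha+1})^2$ denominator, drops out of that Chebyshev computation with the threshold $t = \shR + \shL/\sqrt{1+\alpha}$ (Lemma~\ref{fuch}). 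Finally, you correctly flagged that the displacements of Section~\ref{validdisp} are tailored to $\done(\omega)$; the paper indeed builds a separate set (Lemmas~\ref{vd1} and~\ref{toward}) for $\dtwo_\beta$, and even those reduce only to the looser structure of Theorem~\ref{theorem2} (arbitrary voters in $\regB$, a single mass in $\regD$), not to two points.
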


In this section, we suppose without loss of generality that candidate $\canR$ is the optimal candidate. Thus, Equation \eqref{eddef} can be rewritten as 

\begin{equation}
\label{secondform}
\dtwo_\beta(\election) = \probL \frac{\cost_{\election}(\canL)}{\cost_{\election} (\canR)} + \probR.
\end{equation}

	In this case, if $\canL$ would be the expected winner, we have:
\begin{align}
\dtwo_\beta(\election) 	&= \probL D(\canL) + \probR D(\canR )\nonumber\\
&\leq  \probL D(\canL)+ \probR D(\canL)		& (D(\canL) \geq D(\canR)) \nonumber\\
&= D(\canL). \label{temp3}
\end{align}
In addition, we know that the distortion of the expected winner is at most $\done^*_{\beta}$, which together with Equation \eqref{temp3} implies $\dtwo(\election) \leq \done^*_{\beta}$ for the case that $\canL$ is the expected winner. Therefore, throughout this section we suppose that $\canR$ is both the optimal and the expected winner candidate.

In Theorem \ref{theorem2}, we prove that there is an election with the maximum distortion value and a simple structure.

\begin{theorem}
	\label{theorem2}
	For any $\beta \in [0,1]$, there exists an election $\election^*$  such that $\dtwo_\beta(\election^*)$ is maximum, and in $\election^*$ there is no voter in the interior of regions $\regA$ and $\regC$, and also all the voters in $\regD$ are located at a single point $x_d \in \regD$.  
\end{theorem}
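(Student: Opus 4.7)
The plan is to recycle the valid-displacement scheme from the proof of Theorem \ref{theorem1} but redesigned for $\dtwo_\beta$. Section \ref{ED} already restricts us to elections where $\canR$ is both the optimal and the expected winner, so $D(\canR)=1$ and we may rewrite
\[
\dtwo_\beta(\election) \;=\; 1 + \probL\bigl(D(\canL)-1\bigr).
\]
A displacement is declared \emph{valid} here if it keeps $\canR$ optimal and does not decrease this quantity. I will exhibit three valid moves — emptying the interior of $\regA$, emptying the interior of $\regC$, and collapsing $\regD$ to a single point — and then compose them on any election realizing $\dtwo_\beta^*$ to reshape it into the form promised by the theorem.

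For the $\regA$ step I would reuse Lemma \ref{lemma1}: sliding $v_i$ from $x_i<0$ to $0$ lowers both $\cost_\election(\canL)$ and $\cost_\election(\canR)$ by the same amount $|x_i|$, so Observation \ref{lem:frac} forces $D(\canL)$ to grow (and keeps $\canR$ optimal), while $p_i$ rises to $1$; the resulting bump in $\shL$, combined with stochastic dominance on the $\canL$-vote count, keeps $\probL$ from dropping, so $\dtwo_\beta$ does not decrease. For the $\regD$ step I would mirror Lemma \ref{lem3}: pulling $v_i,v_j\in\regD$ to $(x_i+x_j)/2$ preserves $x_i+x_j$ and hence $D(\canL)$, while convexity of $(2x-1)^{-\beta}$ on $(\tfrac12,\infty)$ forces $p_i+p_j$ and so $\shR$ down, which lifts $\probL$. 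Iterating on the extremal pair collects all of $\regD$ at a single point $x_d$.

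The step I anticipate as the main obstacle is emptying the interior of $\regC$, because a lone shift of a $\regC$-voter toward $\tfrac12$ or $1$ moves $D(\canL)$ and $\probL$ in opposite directions. My plan is to repurpose Lemma \ref{lembc}: for every interior $v_j\in\regC$, pair it with some $v_i\in\regB$ (which is available, possibly as a previously-relocated $\regA$-voter sitting at $0$), and apply Case I of Lemma \ref{lembc} when $d_{i,\canL}\le d_{j,\canR}$, or Case II otherwise. A direct recomputation shows that in both cases $\cost_\election(\canL)$ and $\cost_\election(\canR)$ are preserved \emph{individually}, so $D(\canL)$ and the identity of the optimal candidate are unchanged; meanwhile the subadditivity-of-$t\mapsto t^\beta$ estimate $\Delta_i\ge\Delta_j$ established inside Lemma \ref{lembc} controls the joint shift in the $L$- and $R$-vote distributions tightly enough to prevent $\probL$ from dropping, so $\dtwo_\beta$ does not decrease. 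A single application parks $v_j$ at $\tfrac12$ or $1$, and iterating empties the interior of $\regC$. The residual subtlety is that one of these moves could in principle flip the expected winner from $\canR$ to $\canL$; this corner is absorbed by the same observation used throughout the section, namely that once $\canL$ becomes the expected winner we have $\dtwo_\beta \le D(\canL) \le \done^*_\beta$, so the global supremum of $\dtwo_\beta$ is already attained within the target family. Composing the three moves on any maximizer delivers the $\election^*$ required.
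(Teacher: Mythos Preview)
Your $\regC$ step has a genuine gap. The paired move of Lemma~\ref{lembc} does preserve $\cost_\election(\canL)$ and $\cost_\election(\canR)$ individually, so $D(\canL)$ is fixed; but the inequality $\Delta_i\ge\Delta_j$ only controls the \emph{expected} margin $\shL-\shR$, not the winning probability $\probL$, and $\probL$ can strictly drop. Take $\beta=1$ and three voters at $x_i=\tfrac14\in\regB$, $x_j=\tfrac35\in\regC$, $x_k=2\in\regD$; then $p_i=\tfrac12$, $p_j=\tfrac15$, $p_k=\tfrac13$, candidate $\canR$ is both optimal and the expected winner, and a direct count gives $\probL=\tfrac12$. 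Case~I of Lemma~\ref{lembc} sends $\voter_i$ to $\tfrac{7}{20}$ and $\voter_j$ to $\tfrac12$, after which $p_i'=\tfrac{3}{10}$, $p_j'=0$, and one computes $\probL'=\tfrac{29}{60}<\tfrac12$. Since $D(\canL)$ is unchanged, $\dtwo_\beta(\election)=1+\probL\bigl(D(\canL)-1\bigr)$ strictly decreases, so this displacement is not valid in the sense Section~\ref{ED} requires. There is also a secondary issue: when the election has no voter in $\regA\cup\regB$ at all --- perfectly compatible with $\canR$ being the expected winner --- there is no partner $\voter_i$ available to pair with.

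The paper handles $\regC$ by a different mechanism. Lemma~\ref{vd1} sends each $\voter_j\in\regC$ to the unique point $x_j'\in\regD$ having the \emph{same} participation probability (the mirror of the $\regA\!\to\!\regB$ map $x\mapsto -x/(1-2x)$). Because every individual $p_j$ is preserved, the entire joint vote distribution --- and hence $\probL$ --- is untouched, and one then checks via Observation~\ref{lem:frac} that the relocation raises $D(\canL)$. That is the argument you should reproduce.

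A smaller point: in your $\regD$ step, ``$\shR$ down, therefore $\probL$ up'' is not a valid inference on its own; lowering an expectation does not by itself give the distributional comparison you need. Your midpoint move does happen to work (concavity of $x\mapsto 1-x^{-\beta}$ together with AM--GM yields $(1-p)^2\ge(1-p_i)(1-p_j)$, which is enough for stochastic dominance of $\hat\voter_i+\hat\voter_j$), but the paper instead moves to the geometric-mean point $t=\tfrac12\bigl(1+\sqrt{(2x_i-1)(2x_j-1)}\,\bigr)$ of Lemma~\ref{toward}, which keeps $\mathbb{P}(\hat\voter_i+\hat\voter_j=2)$ exactly fixed and makes the comparison immediate.
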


The basic idea to prove Theorem \ref{theorem2} is as follows: we prove that for every election $\election$, there exists an election $\election'$ with $\dtwo_\beta(\election') \geq \dtwo_\beta(\election)$ and the desired structure. To show this, we collect some of the the voters in $\election$ via a sequence of \emph{valid displacements}, albeit with a new definition for valid displacement.

\begin{definition}
	A displacement is valid, if it does not decrease $\dtwo(\election)$.
\end{definition}

The process of proving that a displacement is valid for this case is relatively tougher than the previous model. The reason is that we do not even  have a closed-form expression which represents the winning probability of each candidate.  In Lemmas \ref{vd1} and \ref{toward}  we explain our tools to discover valid displacements. For brevity, we defer the proofs to these lemmas to Section \ref{validdisp2}.

\begin{restatable}{lemma}{vdfirst}
	\label{vd1}
	For each voter $\voter_i  \in \regA$, there is a point $x_i' \in \regB$ such that moving $v_i$ to $x_i'$ is a valid displacement.
	Furthermore, for each voter $\voter_j \in \regC$, there is a point $x_j' \in \regD$ such that moving $v_j$ to $x_j'$ is a valid displacement.
\end{restatable}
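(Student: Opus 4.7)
The two parts require qualitatively different arguments; part (a) admits a clean direct construction, while part (b) requires a case split.

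\textbf{Part (a): voter $\voter_i \in \regA$.} I would set $x_i' = 0$, the endpoint of $\regB$ coinciding with $\canL$. At this position $\voter_i$ votes for $\canL$ with probability $1$, strictly at least the original value $\bigl(1/(1-2x_i)\bigr)^{\beta}$. Coupling the two voting realizations so that whenever $\voter_i$ votes for $\canL$ in the old scenario she also does in the new, I obtain $\probL^{\text{new}} \geq \probL^{\text{old}}$. Simultaneously both $\cost_{\election}(\canL)$ and $\cost_{\election}(\canR)$ drop by $|x_i|$; since $\canR$ is the optimum we have $D(\canL) \geq 1$, so Observation \ref{lem:frac} applied with $c = d = |x_i|$ yields $D(\canL)^{\text{new}} \geq D(\canL)^{\text{old}}$. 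Writing $\dtwo_{\beta}(\election) = 1 + \probL\bigl(D(\canL) - 1\bigr)$, both factors move non-adversely, so the displacement is valid.

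\textbf{Part (b): voter $\voter_j \in \regC$.} The natural direct analogue does not work, so I propose the probability-preserving image $x_j^{\star} = x_j/(2x_j - 1) \in \regD$. A direct substitution gives $p_j(x_j^{\star}) = p_j(x_j)$, so $\probL$ is exactly unchanged. The induced social-cost increments are $\Delta_L = 2x_j(1-x_j)/(2x_j - 1)$ and $\Delta_R = 2(1-x_j)^2/(2x_j - 1)$, whose ratio is $\Delta_L/\Delta_R = x_j/(1-x_j)$. By Observation \ref{lem:frac}, $D(\canL)$ does not decrease precisely when $x_j/(1-x_j) \geq D(\canL)$, and in that regime validity follows immediately. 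In the complementary regime $x_j/(1-x_j) < D(\canL)$, I would replace the direct move with a continuity argument: let $g(t) := \dtwo_{\beta}(\election_t)$ for $t \in [1,\infty)$, where $\election_t$ is the election with $\voter_j$ relocated to $t$; $g$ is continuous. At $t = 1$ the voter sits on $\canR$, forcing $p_j = 1$ and pushing $D(\canL)$ to its maximum over $[1,\infty)$, while as $t \to \infty$ one has $p_j \to 0$ and $D(\canL) \to 1$, so $g(t) \to 1$. Combining the monotonicity of $\probL$ in $p_j$ (via coupling) with the known behaviour at these two limits and at $x_j^{\star}$, I would extract a point $x_j' \in [1,\infty)$ with $g(x_j') \geq g(x_j)$.

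\textbf{Main obstacle.} The crux is the second regime of part (b), where the probability-preserving move already decreases $D(\canL)$. Because $\probL$ admits no closed form (as the authors emphasise at the start of this section), the argument cannot proceed by algebra alone; it must instead balance a drop in $D(\canL)$ against a gain in $\probL$ (or vice versa). Locating the right $x_j'$ and certifying that the trade-off is weakly favourable — ideally by coupling-based monotonicity coupled with continuity on $[1,\infty)$ — is where the technical weight of the proof lies.
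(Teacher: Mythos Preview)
Your part (a) is correct, though it differs from the paper: the paper moves $\voter_i$ to the \emph{probability-preserving} point $x_i'=-x_i/(1-2x_i)\in\regB$ rather than to $0$, so that $\probL$ is exactly unchanged and only the cost ratio needs to be controlled (via Observation~\ref{lem:frac} and the fact that the ratio of the two decrements is at most $1\le D(\canL)$). Your move to $0$ is arguably simpler; the paper's choice has the advantage that it makes the phrase ``similar arguments'' for the $\regC$ case look natural.

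Your part (b) has a genuine gap. You correctly identify the probability-preserving image $x_j^{\star}=x_j/(2x_j-1)$ --- this is precisely the ``similar argument'' the paper gestures at --- and you are right that it is valid only when $x_j/(1-x_j)\ge D(\canL)$, since here the costs \emph{increase} and the ratio of increments is $x_j/(1-x_j)$. But your continuity patch for the complementary regime does not go through. From $g(\infty)=1$, the value at $x_j^{\star}$, and monotonicity of $\probL$ in $p_j$, one cannot deduce $\sup_{t\ge1}g(t)\ge g(x_j)$: along $[1,\infty)$ every gain in $\probL$ (as $t$ grows and $p_j$ shrinks) is paired with a loss in $D(\canL)$, and nothing forces the product $\probL\cdot(D(\canL)-1)$ to recover the value it had at $x_j\in\regC$. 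In fact the conclusion you are aiming for is false. Take $\beta=1$ and three voters at $0.49$, $0.55$, $1.01$: then $\canR$ is both optimal and the expected winner, $x_j/(1-x_j)=11/9<D(\canL)\approx 2.11$, and a direct computation shows that $\dtwo_\beta$ at the original location $x_j=0.55$ strictly exceeds $\max_{t\ge1}\dtwo_\beta(\election_t)$ (the latter maximum being attained at $t=1.5$). Hence no point of $\regD$ yields a valid displacement in this instance, so the second clause of the lemma cannot be established by your route --- nor, incidentally, by the paper's hand-wave. What \emph{does} suffice for Theorem~\ref{theorem2} is the weaker statement allowing the target to lie in $\regB$ as well as $\regD$; compare the case split on $\gamma_i\lessgtr D(\canL)$ in the proof of Theorem~\ref{thm:general}.
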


\begin{restatable}{lemma}{twd}
	\label{toward}
	Let $\voter_i$ and $\voter_j$ be two voters located respectively at $x_i, x_j \in \regD$. Then, there exists a point $x$ between $x_i$ and $x_j$, such that moving both the voters to $x$ is a valid displacement.
\end{restatable}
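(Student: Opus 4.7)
My plan is to show that the midpoint $x = (x_i + x_j)/2$ is itself a valid common destination; it lies in $\regD$ and strictly between $x_i$ and $x_j$ (the case $x_i = x_j$ being vacuous). Write $p(y) = (2y-1)^{-\beta}$ for the probability that a voter in $\regD$ at position $y$ votes for $\canR$. The joint contribution of $\voter_i, \voter_j$ to $\cost_\election(\canL)$ equals $x_i + x_j = 2x$, and to $\cost_\election(\canR)$ equals $(x_i-1) + (x_j-1) = 2x - 2$; both are preserved by the move. Hence $D(\canL) = \cost_\election(\canL)/\cost_\election(\canR)$ is unchanged, and $D(\canR) = 1$ since $\canR$ is the optimal candidate by the standing assumption of this section.

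The main technical step is to argue that $\probR$ weakly decreases. Write $N_R = N_R' + X_i + X_j$, where $N_R'$ collects the contributions of all voters other than $\voter_i, \voter_j$ and $X_i, X_j$ are the independent Bernoulli indicators that $\voter_i, \voter_j$ vote for $\canR$. Before the move these have parameters $p(x_i), p(x_j)$; after, both have parameter $p(x)$. I would prove that the post-move distribution of $X_i + X_j$ is first-order stochastically dominated by the pre-move one by establishing two pointwise inequalities:
\[ p(x)^2 \le p(x_i)\, p(x_j) \qquad \text{and} \qquad (1-p(x))^2 \ge (1-p(x_i))(1-p(x_j)). \]
The first is log-convexity of $p$: since $\log p(y) = -\beta \log(2y-1)$ has second derivative $4\beta/(2y-1)^2 \ge 0$, $p$ is log-convex on $\regD$. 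The second reduces to concavity of $h(y) = \log(1 - p(y))$ on $(1,\infty)$, which I would verify via a chain-rule calculation giving $h'(y) = 2\beta/(t^{\beta+1} - t)$ with $t = 2y-1$, and then showing this expression is decreasing in $y$ because $(\beta+1)t^\beta - 1 > 0$ for $t > 1$ and $\beta \in [0,1]$.

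Since $N_L$ and $N_R'$ are independent of $X_i + X_j$ and unaffected by the displacement, stochastic dominance of the Bernoulli sum lifts to $N_R$. Hence $\probR = \mathbb{P}(N_R > N_L) + \tfrac{1}{2}\mathbb{P}(N_R = N_L)$ weakly decreases by some $\delta \ge 0$, and $\probL$ increases by the same $\delta$. Rewriting $\dtwo_\beta(\election) = \probL\,(D(\canL) - 1) + 1$ via $\probL + \probR = 1$ and $D(\canR) = 1$, the net change is $\Delta \dtwo_\beta = \delta\,(D(\canL) - 1) \ge 0$, establishing validity of the midpoint move. The main obstacle is the concavity of $\log(1-p(y))$: this is the single nontrivial analytic inequality and is precisely what controls $\mathbb{P}(X_i + X_j = 0)$ in the FSD argument; once it is in hand, the remaining manipulation on sums of independent Bernoullis is routine.
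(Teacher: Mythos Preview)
Your argument is correct, but it is genuinely different from the paper's. The paper does \emph{not} use the arithmetic midpoint: it moves both voters to
\[
t=\frac{\sqrt{(2x_i-1)(2x_j-1)}+1}{2},
\]
which is chosen so that $p(t)^2=p(x_i)p(x_j)$ exactly. With that choice, $\mathbb{P}(X_i+X_j=2)$ is preserved, and showing $\mathbb{P}(X_i+X_j=0)$ increases reduces to $2\sqrt{p(x_i)p(x_j)}\le p(x_i)+p(x_j)$, i.e.\ AM--GM; no calculus on $\log(1-p)$ is needed. The price is that the sum $x_i+x_j$ is \emph{not} preserved: $2t\le x_i+x_j$ (again AM--GM), so the paper must additionally argue that subtracting the same amount from both social costs increases $D(\canL)$ since the ratio exceeds $1$.

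Your route trades these difficulties: by taking the arithmetic midpoint you keep $D(\canL)$ fixed for free, but you then need both $p(x)^2\le p(x_i)p(x_j)$ and $(1-p(x))^2\ge (1-p(x_i))(1-p(x_j))$. The second inequality is the substantive one, and your verification of the concavity of $h(y)=\log(1-p(y))$ via $h'(y)=2\beta/(t^{\beta+1}-t)$ being decreasing is correct for $\beta\in(0,1]$ (the case $\beta=0$ is vacuous). Your coupling/FSD step and the final rewriting $\dtwo_\beta=\probL(D(\canL)-1)+1$ are also sound. In short: the paper's proof is slightly more elementary on the probability side (pure AM--GM) but carries an extra distortion step; yours is cleaner on the distortion side but needs one genuine convexity computation.
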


\begin{figure}
	\begin{center}
		\begin{tikzpicture}
		\node[draw]{\includegraphics[scale=1.2]{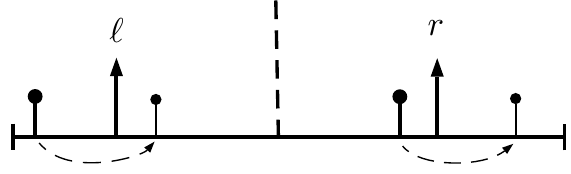}};
		\end{tikzpicture}
	
		\vspace{1mm}
		\begin{tikzpicture}
		\node[draw]{\includegraphics[scale=1.2]{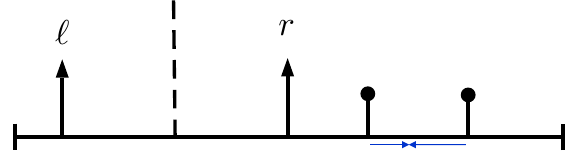}};
		\end{tikzpicture}
		\caption{Valid displacements introduced in Lemmas \ref{vd1} and \ref{toward}.}    
		\label{displacements2}
	\end{center}
\end{figure}

Figure \ref{displacements2},  shows a summary of the displacements described in Lemmas \ref{vd1} and \ref{toward}. Using these displacements, one can establish an election with the maximum expected distortion, and the following structure (see Figure \ref{fig:final-2}): 
the interior of regions $\regA$ and $\regC$ contain no voter. All the voters in $\regD$ are located at point $x_d \in \regD$. 

\begin{figure}[h]
	\begin{center}
		\includegraphics{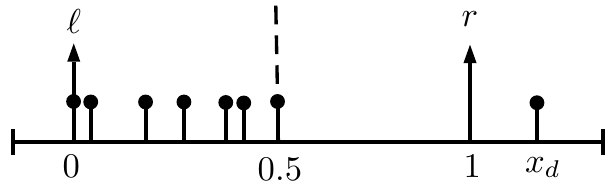}
	\end{center}
	\caption{For any $\beta \in [0,1]$, there is an election with the maximum expected distortion and this structure.}
	\label{fig:final-2}
\end{figure}
\begin{proof}[Proof of Theorem \ref{theorem2}]
	Consider  an election with the maximum expected distortion. By Lemma \ref{vd1} we can suppose that the interior of regions $\regA$ and $\regC$ are empty. Furthermore, by iteratively applying Lemma \ref{toward} on the farthest pair of points in Region $\regD$, we can collect all the voters of $\regD$ into a single point and transform the election into one with the maximum distortion, and the desired structure.

\end{proof}

\subsection{An Almost Tight Bound on $\dtwo^*_\beta$} \label{aatb}
In this section, we discuss  the value of $\dtwo^*_\beta$, for any $\beta \in [0,1]$. As mentioned, to prove our upper and lower bounds in this section, we use the bounds obtained in Section \ref{ubond}. 

Similar to Section \ref{ubond}, we begin with the boundary case of $\beta=0$. By a similar argument as in Section \ref{ubond}, for $\beta=0$ all the voters vote for their preferred candidate and so we have $\dtwo^*_0=3$. For $\beta > 0$, we prove Theorem \ref{bbnd} which provides an asymptotic upper bound on $\dtwo^*_\beta$ for any $\beta \in (0,1]$.


\secondthm*

To prove Theorem \ref{bbnd}, we first prove Lemmas \ref{lbnd} and \ref{fuch}. 

\begin{lemma}
	\label{lbnd}
	Let $\alpha$ be a constant, and $\election$ be an election, with the property that $\canR$ is the optimal and the expected winner candidate, and $\shR/\shL \leq 1+\alpha$. Then $\dtwo_\beta(\election)\leq (1+2\alpha) \done^*_\beta.$
\end{lemma}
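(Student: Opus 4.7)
The plan is to prove the stronger inequality $D(\canL) \leq (1+2\alpha)\done^*_\beta$. The lemma will then follow at once: since $\canR$ is optimal we have $D(\canR) = 1$, so
\[ \dtwo_\beta(\election) = \probL D(\canL) + \probR \cdot 1 \leq \probL D(\canL) + \probR D(\canL) = D(\canL), \]
using $D(\canL) \geq 1$.

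The heart of the argument will be an auxiliary-election trick that converts the bound on the distortion of the expected winner into a bound on $D(\canL)$. I would form $\election'$ from $\election$ by injecting $k := \shR - \shL \geq 0$ additional voters at position $0$. Such voters have $f_\beta(0,1) = 1$ and so always vote for $\canL$, giving $\shL' = \shL + k = \shR = \shR'$; thus $\canL$ is (after a vanishing perturbation if strictness is required) an expected winner of $\election'$. By the definition of $\done^*_\beta$,
\[ D_{\election'}(\canL) = \frac{\cost_\election(\canL)}{\cost_\election(\canR) + k} \leq \done^*_\beta, \]
and rearranging yields $D(\canL) \leq \done^*_\beta \bigl(1 + k/\cost_\election(\canR)\bigr)$.

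It then remains to show $k/\cost_\election(\canR) \leq 2\alpha$. The numerator satisfies $k = \shR - \shL \leq \alpha \shL$ by hypothesis. For the denominator, every voter preferring $\canL$ lies at some $x_i < 1/2$ and hence contributes at least $1/2$ to $\cost_\election(\canR)$, so $\cost_\election(\canR) \geq n_L / 2$, where $n_L$ counts such voters. Since each participates with probability at most $1$, we have $\shL \leq n_L$, and combining the two bounds gives $k/\cost_\election(\canR) \leq 2\alpha$, closing the argument.

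The main obstacle I anticipate is the integrality of $k$: the construction above is not literally an election in the paper's sense unless $k$ happens to be an integer. I would handle this by first replicating $\election$ by a large factor $N$ (which preserves both $D(\canL)$ and the ratio $k/\cost_\election(\canR)$, since $\shL$, $\shR$, and both social costs scale linearly), then adding $\lceil Nk \rceil$ voters at position $0$ and letting $N \to \infty$. Beyond this scaling technicality, the proof is a short chain of inequalities and I foresee no further difficulty.
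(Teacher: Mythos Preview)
Your argument is correct and follows the paper's own strategy: form an auxiliary election $\election'$ by planting extra voters at $0$ so that $\canL$ becomes an expected winner, invoke the definition of $\done^*_\beta$ to obtain $\cost_{\election'}(\canL)/\cost_{\election'}(\canR)\le\done^*_\beta$, and then control how much $\cost_\election(\canR)$ was inflated. The only difference worth flagging is in this last step. The paper first appeals to the structural reduction of Theorem~\ref{theorem2} (voters lie in $\regB$ or at a single point $x_d\in\regD$) and then uses the pointwise inequality $(1-2x)^\beta/(1-x)\le 2$ on $[0,1/2]$ to conclude $s/\cost_\election(\canR)\le 2\alpha$. You get the same factor $2$ more directly from $\shL\le n_L$ together with $\cost_\election(\canR)\ge n_L/2$, which works for an arbitrary line election and does not require the structural reduction; in that sense your route is slightly more self-contained. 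Your replication fix for the integrality of $k$ is fine---the paper is equally informal on this point.
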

\begin{proof}
	To prove this lemma we add sufficient number of agents at point $0$ to alter the expected winner to $\canL$. After this operation, since $\canL$ is the expected winner, we know that the expected distortion of $\canL$ is at most $\done^*_\beta$. Next, based on the number of voters added at point $0$, we bound the value of $\dtwo_\beta(\election)$.  
	
	Let $s$ be the minimum number of voters we need to add at point $0$ to convert $\canL$ to the expected winner. Since $\shR \leq (1+\alpha)\cdot \shL$, and each voter at point $0$ contributes $1$ to $\shL$, we have $ s \leq \alpha \cdot \shL.$ Let $\election'$ be the election, after adding the agents at point $0$. 
	Since the expected winner of $\election'$ is $\canL$, the expected distortion of $\canL$ is upper bounded by $\done^*_\beta$:
	\begin{equation}
	\label{eqeqeq}
	\frac{\cost_{\election'}(\canL)}{\cost_{\election'}(\canR)} \leq \done^*_\beta.
	\end{equation}
	
	Moreover, since we add the agents at point $0$, their cost for candidate $\canL$ is zero and hence, $\cost_{\election'}(\canL) = \cost_{\election}(\canL)$. Thus, we have
	\begin{equation}
	\label{eqeq}
	\frac{\cost_{\election}(\canL)/\cost_{\election}(\canR)}{\cost_{\election'}(\canL)/\cost_{\election'}(\canR)} =\frac{\cost_{\election'}(\canR)}{\cost_{\election}(\canR)}
	\end{equation}
	
	Now, we show that the ratio $\cost_{\election'}(\canR)/\cost_{\election}(\canR)$ is upper bounded by $1 + 2\alpha$. First, let us calculate the explicit formulas of $\cost_{\election}(\canR)$ and $\cost_{\election'}(\canR)$. 
	As discussed before, we can assume 
	that the agents in $\election$ are located either in Region $\regB$ or at point $x_d \in \regD$. Let $q_d$ be the population of the voters that are located at $x_d$. We have 
	$$
	\cost_{\election}(\canR) = \sob (1-x_\voter)  + q_d(x_d-1).
	$$
	Furthermore, we have 
	$
	\cost_{\election'}(\canR) = \cost_{\election}(\canR) + s,
	$
	where 
	\begin{align*}
	s &\leq \alpha \cdot \shL  \\
	&= \alpha \sob(1-2x_\voter)^\beta.
	\end{align*}
	Thus, we have 
	\begin{align*}
	\frac{\cost_{\election'}(\canR)}{\cost_{\election}(\canR)} &\leq 1 + \frac{\alpha \sob(1-2x_\voter)^\beta}{ \sob (1-x_\voter)  + q_d(x_d-1)} \\
	&\leq 1 + \alpha \cdot\frac{\sob(1-2x_\voter)^\beta}{ \sob (1-x_\voter)},
	\end{align*}
	and since for any $x \leq 1/2$ we have $\frac{(1-2x)^\beta}{1-x} \leq 2$, 
	\begin{equation} \label{sx}
\frac{\cost_{\election'}(\canR)}{\cost_{\election}(\canR)} \leq 1+ 2\alpha.
	\end{equation} 
	Inequality \eqref{sx} together with Equations \eqref{eqeqeq} and \eqref{eqeq} implies: 
	\begin{align*}
	\frac{\cost_{\election}(\canL)/\cost_{\election}(\canR)}{\done^*_\beta} \leq 1+ 2\alpha.
	\end{align*}
	Thus, by Equation \eqref{secondform}, we have 
$$\dtwo_\beta(\election) \leq \probL (1+2\alpha) \done^*_\beta+ \probR,$$
and since $\probL  + \probR = 1$ we conclude that
 $\dtwo_\beta(\election) \leq (1+2\alpha) \done^*_\beta.$
\end{proof}

\begin{lemma}
	\label{fuch}
	Let $\alpha$ be a constant, and $\election$ be an election, with the property that $\canR$ is the optimal and the expected winner candidate, and $\shR/\shL > 1+\alpha$. Then, if the number of candidates would be large enough, we have  $\dtwo_\beta(\election)\leq (1+2\alpha) \done^*_\beta$.
\end{lemma}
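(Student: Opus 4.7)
The high-level strategy is a concentration argument: when $\shR > (1+\alpha)\shL$ and both expected vote counts are sufficiently large, the probability $\probL$ that $\canL$ wins decays fast enough to absorb whatever distortion $D(\canL)$ could contribute. Since $\canR$ is optimal, $D(\canR) = 1$, and using $\probL + \probR = 1$ one may rewrite
$$\dtwo_\beta(\election) \;=\; \probL \cdot D(\canL) + \probR \cdot 1 \;=\; 1 + \probL\bigl(D(\canL) - 1\bigr),$$
so the goal reduces to showing $\probL(D(\canL) - 1) \leq (1+2\alpha)\done^*_\beta - 1$. The plan is to bound the two factors $\probL$ and $D(\canL) - 1$ separately and then combine them.

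For the distortion factor, I would invoke Theorem~\ref{theorem2} to reduce to the canonical structure where $q_B$ voters lie in $\regB$ at positions $x_1, \ldots, x_{q_B}$ and $q_D$ voters sit at a single point $x_d \in \regD$. A direct computation gives
$\cost_\election(\canL) - \cost_\election(\canR) = \sum_{v \in \regB}(2 x_v - 1) + q_D \leq q_D$,
and since each $\regB$-voter contributes at least $1/2$ to $\cost_\election(\canR)$, we get $\cost_\election(\canR) \geq q_B/2 \geq \shL/2$ (using $q_B \geq \shL$, since each $\regB$-voter contributes at most $1$ to $\shL$). This yields $D(\canL) - 1 \leq 2 q_D / \shL$. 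For the probability factor, I would apply Chebyshev's inequality to the vote-difference $X_R - X_L = \sum_{i=1}^n Y_i$, a sum of independent $\{-1, 0, 1\}$-valued variables with mean $\shR - \shL > \alpha \shL$ and variance at most $\shR + \shL$ (noting that $X_R$ and $X_L$ are uncorrelated since each voter contributes to exactly one of them). This produces
$$\probL \;\leq\; \frac{\shR + \shL}{(\shR - \shL)^2}.$$

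The main obstacle is combining the two estimates into the claimed inequality with the exact threshold $\phi(\alpha) = (\alpha+1)^3 / \bigl(\alpha^2 (\alpha - \sqrt{\alpha+1})^2\bigr)$ from Theorem~\ref{bbnd}. Multiplying yields a bound on $\probL(D(\canL) - 1)$ in terms of $q_D$, $\shL$, and the ratio $K = \shR/\shL$. The worst case should occur when $x_d$ is just above $1$, so $q_D$ is as large as possible relative to $\shR$, reducing the problem to a one-variable optimization over $K \in (1+\alpha, \infty)$ subject to $\shL, \shR \geq \phi(\alpha)$. The somewhat intricate form of $\phi(\alpha)$, and in particular the $\sqrt{\alpha+1}$ term, should arise as the critical point (or discriminant) of the resulting quadratic in $K$; verifying that the stated threshold is exactly what is needed for $\probL(D(\canL)-1) \leq 2\alpha \leq (1+2\alpha)\done^*_\beta - 1$ is the routine-but-careful calculus step that completes the argument.
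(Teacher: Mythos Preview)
Your overall architecture matches the paper's: rewrite $\dtwo_\beta(\election)=1+\probL\bigl(D(\canL)-1\bigr)$, reduce to the canonical structure of Theorem~\ref{theorem2}, bound $D(\canL)-1$ via the $\regB$/$\regD$ decomposition, and bound $\probL$ by Chebyshev. Your Chebyshev on the difference $X_R-X_L$ is a legitimate variant of what the paper does (the paper instead applies Chebyshev separately to $\hat\canL$ and $\hat\canR$ around a threshold $t=\shL+\shR/\sqrt{1+\alpha}$).

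There is, however, a real gap in the combination step. Your bound $D(\canL)-1\le 2q_D/\shL$ together with $\probL\le(\shR+\shL)/(\shR-\shL)^2$ gives a product that still contains $q_D$, and $q_D=\shR\,(2x_d-1)^\beta$ is \emph{unbounded} as $x_d\to\infty$ for fixed $\shR$. So the product blows up and you cannot reduce to a one-variable optimization in $K=\shR/\shL$ alone. Your sentence ``the worst case should occur when $x_d$ is just above $1$, so $q_D$ is as large as possible relative to $\shR$'' has the monotonicity backwards: $q_D/\shR=(2x_d-1)^\beta$ is increasing in $x_d$, and is small (not large) near $x_d=1$. The paper closes this hole with a separate, easy observation you are missing: since the $\regD$ voters alone already give $\cost_\election(\canR)\ge q_D(x_d-1)$, one has $D(\canL)\le x_d/(x_d-1)$ outright; hence if $x_d\ge 1+1/(2\alpha)$ the conclusion $\dtwo_\beta(\election)\le 1+2\alpha\le(1+2\alpha)\done^*_\beta$ is immediate, and in the remaining case one may assume $x_d<1+1/(2\alpha)$, which finally bounds $q_D\le(1+1/\alpha)\,\shR$ and makes the concentration argument go through.

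A secondary point: the specific form of $\phi(\alpha)$ with the $\sqrt{\alpha+1}$ term is an artifact of the paper's particular threshold choice $t=\shL+\shR/\sqrt{1+\alpha}$ in its split Chebyshev. Your single Chebyshev on $X_R-X_L$ would yield a different (comparable) threshold function, so you should not expect to recover exactly the stated $\phi(\alpha)$ from your route.
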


\begin{proof}
	To prove Lemma \ref{fuch}, we use the fact that the number of votes that a candidate receives is concentrated around it's expected value. By definition, we have 
		\begin{align}
	\dtwo_\beta(\election) &= \probL\frac{\cost_{\election}(\canL)}{\cost_{\election}(\canR)} + (1-\probL)\nonumber\\
	&= \probL (\frac{\cost_{\election}(\canL)}{\cost_{\election}(\canR)}-1) + 1 \nonumber \\
	&= \probL \frac{\sob (2x_\voter-1)+q_d}{\sob(1-x_\voter)+ q_d(x_d-1)} +\! 1 \nonumber\\
	&\leq\probL \frac{q_d}{\sob(1-x_\voter)} + 1  &(x_\voter <1/2) \nonumber\\
	&\leq\probL \frac{q_d}{\sob(1-2x_\voter)} + 1.  \label{eq:dist2}
	\end{align}

Let $\hat \canL$ and $ \hat \canR$ be two random variables indicating the number of votes that $\canL$ and $\canR$ receive in $\election$, respectively. These two variables are the sum of i.i.d. Bernoulli variables each indicating whether a voter casts a vote or not. Note that all the voters that contribute to $\hat \canR$ are located at the same point, but voters contributing to $\hat \canL$ might have different locations. Using these facts we can calculate the expected value and the variance of $\hat \canL$ and $ \hat \canR$. We have:
	\begin{align*}
	\mathbb{E} [\hat \canR] = \shR &= \frac{q_d}{(2x_d-1)^\beta}, \\
	\mathrm{Var}(\hat \canR) = \sigma^2_\canR &= \frac{q_d}{(2x_d-1)^\beta} \times (1- \frac{1}{(2x_d-1)^\beta}) , \\
	\mathbb{E}[\hat \canL] = \shL &= \sob (1-2x_\voter)^\beta, \\
	\mathrm{Var}(\hat \canL) = \sigma^2_\canL &= \sob (1-2x_\voter)^\beta \times (1-(1-2x_\voter)^\beta).
	\end{align*}

	Let $t = \shL + \frac{\shR}{\sqrt{1+\alpha}}$. Since $t \in [\shL,\shR]$,  we have 
	\begin{equation}
	\mathbb{P}(\hat \canL \geq \hat{\canR}) \leq \mathbb{P}(\hat \canL \geq t) + \mathbb{P}(\hat\canR \leq t). \label{plr}
	\end{equation}
	
	Now, since we know both the expected value and the variance of $\hat \canR$ and $\hat \canL$ we can use Chebyshev's inequality to provide an upper bound on $\mathbb P(\hat \canL \geq \hat\canR)$. 
	
	Chebyshev's inequality states that for a random variable $T$ with finite expected value $\mu$ and finite non-zero variance $\sigma^2$, and for any real number $k > 0$,
\begin{equation}
\mathbb {P}(|T-\mu |\geq k )\leq {\frac {\sigma^2}{k^{2}}}. \label{cheb}
\end{equation}

Therefore we have:
\begin{align}
\mathbb{P}(\hat \canL \geq t) &\leq \mathbb {P}(|\hat \canL-\shL|\geq \frac{\shR}{\sqrt{1+\alpha}}) \nonumber\\
&\leq {\frac {\sigma_\canL^2}{\frac{1}{1+\alpha}\shR^{2}}} \nonumber\\
&\leq {\frac {\sob (1-2x_\voter)^\beta \times (1-(1-2x_\voter)^\beta)}{\shR \times \shL}} \nonumber\\
&\leq {\frac {\sob (1-2x_\voter)^\beta \times (1-(1-2x_\voter)^\beta)}{\shR \times \sob (1-2x_\voter)^\beta}} \nonumber\\
&\leq \frac{1}{\shR}. \label{plgt}
\end{align}
On the other hand,
\begin{align}
\mathbb{P}(\hat \canR \leq t) &\leq \mathbb {P}(|\hat \canR-\shR|\geq \shR-\shL-\frac{\shR}{\sqrt{1+\alpha}})) \nonumber\\
&\leq \mathbb {P}(|\hat \canR-\shR|\geq \shR-\frac{\shR}{1+\alpha}-\frac{\shR}{\sqrt{1+\alpha}})) \nonumber\\
&\leq {\frac {\sigma_\canR^2}{\frac{\alpha^2+1+\alpha - 2\alpha\sqrt{1+\alpha}}{(1+\alpha)^2}\shR^{2}}} \nonumber\\
&= {\frac {\frac{q_d}{(2x_d-1)^\beta} \times (1- \frac{1}{(2x_d-1)^\beta})}{\frac{(\alpha - \sqrt{\alpha + 1})^2}{(1+\alpha)^2}\shR^{2}}} \nonumber\\
&= {\frac {(1- \frac{1}{(2x_d-1)^\beta})}{\frac{(\alpha - \sqrt{\alpha + 1})^2}{(1+\alpha)^2}\shR}}. \label{prlt}
\end{align}
Let 
	$$ \phi(\alpha) = \frac{(\alpha - \sqrt{\alpha + 1})^2}{(1+\alpha)^2}.$$
	Putting Equations \eqref{plr}, \eqref{plgt} and  \eqref{prlt} together we have:
		\begin{equation*}
	\mathbb{P}(\hat \canL \geq \hat{\canR}) \leq \frac{1}{\shR} + \frac{1-\frac{1}{(2x_d-1)^\beta}}{f(\alpha)\shR},
	\end{equation*}
	and by Equation \eqref{eq:dist2} we have:
	\begin{align}
	\dtwo_\beta(\election) &\leq \left(\frac{1}{\shR} + \frac{1-\frac{1}{(2x_d-1)^\beta}}{f(\alpha)\shR}\right) \times \frac{q_d}{\sob(1-2x_\voter)} + 1\nonumber\\
	&= \left(\left(2x_d-1\right)^\beta + \frac{(2x_d-1)^\beta-1}{f(\alpha)}\right) \times \frac{1}{\shL} + 1. \label{temp}
	\end{align}
	Note that since $x_d$ is the only location more distant to $\canL$ than $\canR$, even for $q_d = 1$ the distortion of candidate $\canL$ and consequently the distortion of the election is upper-bounded by $x_d/x_d-1$. Therefore, if $x_d \geq {1 \over 2\alpha} + 1 $, the distortion of the election is upper bounded by $1+2\alpha$ (i.e. $\dtwo_\beta(\election)\leq (1+2\alpha) \done^*_\beta$). So here we assume $x_d < {1 \over 2\alpha} + 1 $. If we substitute ${1 \over 2\alpha} + 1$ for $x_d$ in \eqref{temp} we have:
	
	\begin{align}
	\dtwo_\beta(\election) &\leq \left(\left({1 \over \alpha}+1\right)^\beta + \frac{({1 \over \alpha}+1)^\beta-1}{f(\alpha)}\right) \times \frac{1}{\shL} + 1 \nonumber \\
		&\leq \left({1 \over \alpha}+1\right) \times \left( 1 +\frac{1}{f(\alpha)}\right) \times \frac{1}{\shL} + 1 \nonumber \\
		&= {1 + \alpha \over \alpha} \times \left( 1 +\frac{(1+\alpha)^2}{(\alpha - \sqrt{\alpha + 1})^2}\right) \times \frac{1}{\shL} + 1 \nonumber \\
		&\leq \frac{2 (\alpha + 1)^3}{\alpha (\alpha - \sqrt{\alpha + 1})^2}\times \frac{1}{\shL}+1, \label{finald2}
	\end{align}
	where the last line is due to the fact that
	$$1 \leq \frac{(1+\alpha)^2}{(\alpha - \sqrt{\alpha + 1})^2}.$$
	
	Now, suppose that the the expected number of votes that each candidate receives is large enough, so that 
	\begin{equation*}
		\shL \geq \frac{(\alpha + 1)^3}{\alpha^2 (\alpha - \sqrt{\alpha + 1})^2}.
	\end{equation*}
	By Equation \eqref{finald2} we have:
		\begin{align*}
		\dtwo_\beta(\election) &\leq \frac{2 (\alpha + 1)^3}{\alpha (\alpha - \sqrt{\alpha + 1})^2}\times \frac{\alpha^2 (\alpha - \sqrt{\alpha + 1})^2}{(\alpha + 1)^3}+1 \\
		&\leq 1+2\alpha \\ &\leq (1+2\alpha) \done^*_\beta.
		\end{align*}
	This completes the proof.
\end{proof}

Now, we are ready to prove Theorem \ref{bbnd}. 

\begin{proof}[Proof of Theorem \ref{bbnd}.]
	Fix any $\alpha>0$ and $\beta\in [0,1]$, and let $\election \in \Omega_\beta$ be an arbitrary election whose candidates receive at least
	$$\frac{(\alpha + 1)^3}{\alpha^2 (\alpha - \sqrt{\alpha + 1})^2}$$ 
	expected number of votes. 
	Recall that our assumption is that $\canR$ is both the optimal and the expected winner. Now, based on the value of $\shR / \shL$, there are two cases: either $\shR / \shL \leq 1+\alpha$ or $\shR / \shL > 1+\alpha$. For the first case, by Lemma \ref{lbnd}  value of $\dtwo_\beta(\election)$ is upper bounded by $(1+2\alpha) \done^*_\beta$. For the second case, since  
	$$
	\shL \geq \frac{(\alpha + 1)^3}{\alpha^2 (\alpha - \sqrt{\alpha + 1})^2},
	$$
	by Lemma \ref{fuch},  
	the expected distortion is upper bounded by $(1+2\alpha) \done^*_\beta$. 
	Combining these two cases yields the upper-bound of  $(1+2\alpha)\done^*_\beta$ on $\dtwo_\beta(\election)$.
\end{proof}

As an example, for $\alpha = 0.1$, Theorem \ref{bbnd} states that for every election $\election$ whose candidates receive at least $148$ expected number of votes, the expected distortion is upper bounded by $1.2\done^*_\beta$ which for $\beta=1$ is $ \simeq 1.83$.

We complement Theorem \ref{bbnd} by describing how to construct bad examples with expected distortion value near $\done^*_\beta$.


\begin{example}
	\label{exmpl}
	Consider Optimization Problem \ref{cp2}, with an additional constraint that $\shL \geq \shR (1+\varepsilon)$ for a fixed constant $\varepsilon$, and let $\done^{**}_\beta$ be the answer of this optimization problem and $\election^{**}$ be its corresponding election. By Chernoff bound, for a large enough value of $\shL$, candidate $\canL$ wins the election with a high probability, i.e., 
	$$\lim_{\shL \rightarrow \infty} \dtwo_\beta(\election^{**}) \simeq D(\canL) \simeq \done^*_\beta.$$  
\end{example}

Note that, the bound provided by Theorem \ref{bbnd} is almost tight; as the election size grows, the upper bounds of Theorem \ref{bbnd} tends to the distortion value of  Example \ref{exmpl}. However, for elections with a small number of voters, the distortion value might be larger. For example, consider a simple scenario where there is one voter located at point $1+\varepsilon \in \regD$ and $\beta = 1$ (see Figure \ref{fig:bad}). For this case, the distortion value is
\begin{align*}
\probL\cdot \frac{\cost_{\election}(\canL)}{\cost_{\election}(\canR)} + \probR &= \probL \cdot \frac{1+\varepsilon}{\varepsilon} + \probR \\
&= \frac{\varepsilon}{1+2\varepsilon} \cdot \frac{1+\varepsilon}{\varepsilon} + \frac{1+\varepsilon}{1+2\varepsilon}\\
&= \frac{2+2\varepsilon}{1+2\varepsilon},
\end{align*}
which tends to $2$ as $\varepsilon \rightarrow 0$. We conjecture that this example is the worst possible scenario and value of $\dtwo^*_\beta$ is upper bounded by $2$ for any election with any size while $\beta = 1$. 

\begin{figure}[h]
	\begin{center}
		\includegraphics[scale=0.9]{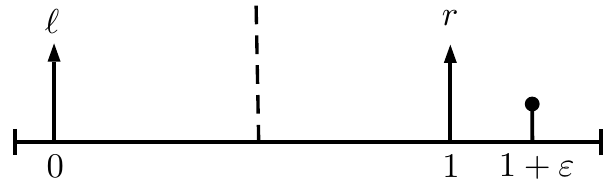}
	\end{center}
	\caption{An example with maximum expected distortion. $\dtwo_\beta(\election)$ for $\beta=1$  tends to $2$ as $\varepsilon\rightarrow 0$.}
	\label{fig:bad}
\end{figure}

\subsection{Valid Displacements}\label{validdisp2}

In this section, we prove Lemmas \ref{vd1} and \ref{toward}. 
\vdfirst*
\begin{proof}
	We prove the statement of Lemma \ref{vd1} for regions $\regA$ and $\regB$. Similar arguments can be used to prove the lemma for regions $\regC$ and $\regD$.
	Let $x_i$ be the current location of $\voter_i$ in region $\regA$ ($x_i<0$). By definition, $\voter_i$ casts a vote with probability $1 / \left(1-2x_i\right)^\beta.$ 
	Now, consider point $x = -x_i/\left(1-2x_i\right)$. We claim that an agent at $x$, votes for $\canL$ with the same probability as $\voter_i$. First, note that since $x_i<0$, $$0\leq-x_i/\left(1-2x_i\right)\leq 1/2.$$
	Hence, the preferred candidate of the voter located at $x$ is $\canL$. Furthermore, for any agent at $x$, the probability of casting a vote is
	\begin{align*}
	(1-2x)^\beta &= \left(1- \frac{-2x_i}{1-2x_i}\right)^\beta \\
	&=\left(\frac{1-2x_i+2x_i}{1-2x_i} \right)^\beta \\
	&=\left(\frac{1}{1-2x_i}\right)^\beta .
	\end{align*}

	Therefore, by moving $v_i$ from $x_i \in \regA$ to $x'_i= \frac{-x_i   }{1-2x_i} $, the probability that $v_i$ votes for $\canL$ remains the same. Let $\election'$ be the election, after moving $v_i$ to $x$.
	We have
	\begin{align*}
	\frac{\cost_{\election'}(\ell)}{\cost_{\election'}(r)} = \frac{\cost_{\election}\left(\canL\right) - \big[-x_i - \left(-x_i/\left(1-2x_i\right)\right)\big]}{\cost_{\election}\left(\canR\right)  -\big[\left(1-x_i\right) - \left(-x_i/\left(1-2x_i\right)\right)\big]}.	
	\end{align*}
	Since we have
	$$\frac{-x_i - \left(-x_i/\left(1-2x_i\right)\right)}{\left(1-x_i\right) - \left(-x_i/\left(1-2x_i\right)\right)} \leq 1 \leq \frac{\cost\left(\canL\right)}{\cost\left(\canR\right)},$$
	using Observation \ref{lem:frac} we conclude that 
	$
		\frac{\cost_{\election'}(\ell)}{\cost_{\election'}(r)} \geq \frac{\cost_{\election}(\ell)}{\cost_{\election}(r)}
			$ which in turn implies that
	moving $\voter_i$ to $x$ is a valid displacement.
\end{proof}

\twd*
\begin{proof}
	Assume without loss of generality that $x_i < x_j$. We show that we can move both the voters to point $$t=\frac{\sqrt{\left(2x_i-1\right)\left(2x_j-1\right)} +1 }{2}.$$
	
	Let $\hat \voter_k$ be a random variable which is equal to $1$, if $\voter_k$ casts a vote and $0$ otherwise. In addition, let $$A_k = \mathbb{P} \left(\canR \text{ wins the election}| \hat \voter_i + \hat\voter_j  = k\right)$$ where $0 \leq k \leq 2$. Trivially, we have $A_0 \leq A_1 \leq A_2$, and 
	\begin{align}
	\probR &= A_0 \cdot \mathbb{P} \left(\hat\voter_i + \hat\voter_j  = 0\right) \nonumber\\&+ A_1 \cdot \mathbb P\left(\hat\voter_i + \hat\voter_j  = 1\right) \nonumber\\&+ A_2 \cdot \mathbb P\left(\hat\voter_i + \hat\voter_j  = 2\right). \label{prr}
	\end{align}
	Furthermore, note that we have 
\begin{align*}
	\mathbb P\left(\hat \voter_i+\hat \voter_j=0\right) &= \left(1-  \frac{1}{(2x_i-1)^\beta} \right) \left(1-  \frac{1}{(2x_j-1)^\beta} \right)\\
	&= \frac{1+\left(2x_i-1\right)^\beta\left(2x_j-1\right)^\beta-\left(2x_i-1\right)^\beta-\left(2x_j-1\right)^\beta} {\left(2x_i-1\right)^\beta\left(2x_j-1\right)^\beta},\\
	\mathbb P\left(\hat \voter_i+\hat \voter_j=2\right) &= \frac{1}{\left(2x_i-1\right)^\beta\left(2x_j-1\right)^\beta}.
\end{align*}

	Let $\hat \voter'_i$ and $\hat \voter'_j$ be variables indicating whether $\voter_i$ and $\voter_j$ cast a vote or not, after the displacement. We have 
	\begin{align*}
	\mathbb P\left(\hat \voter'_i+\hat \voter'_j=0\right) &= \left(1-  \frac{1}{(2t-1)^\beta}\right)^2\\
	&= \left(1-  \frac{1}{(\sqrt{\left(2x_i-1\right)\left(2x_j-1\right)})^\beta}\right)^2\\
	&= \frac{1 + \left(2x_i-1\right)^\beta\left(2x_j-1\right)^\beta - 2\sqrt{\left(2x_i-1\right)^\beta\left(2x_j-1\right)^\beta} }{\left(2x_i-1\right)^\beta\left(2x_j-1\right)^\beta},\\
	\mathbb P\left(\hat \voter'_i+\hat \voter'_j=2\right) &= \frac{1}{\left(2t-1\right)^{2\beta}} = \frac{1}{\left(2x_i-1\right)^\beta\left(2x_j-1\right)^\beta}.
	\end{align*}

	Thus, we have $$\mathbb P\left(\hat \voter_i+\hat \voter_j=2\right) = \mathbb P\left(\hat \voter'_i+\hat \voter'_j=2\right).$$
%
Now, we show 
	$$\mathbb P(\hat \voter_i+\hat \voter_j=0)  \leq \mathbb P(\hat \voter'_i+\hat \voter'_j=0).$$
	We have
	\begin{align*}
	\mathbb P(\hat \voter_i+\hat \voter_j=0) - \mathbb P(\hat \voter'_i+\hat \voter'_j=0) &= \frac{2\sqrt{\left(2x_i-1\right)^\beta\left(2x_j-1\right)^\beta}-\left(2x_i-1\right)^\beta-\left(2x_j-1\right)^\beta}{\left(2x_i-1\right)^\beta\left(2x_j-1\right)^\beta}.
	\end{align*}
	Since $(2x_i-1)^\beta(2x_j-1)^\beta > 0$ we just need to show 
	$$
2\sqrt{\left(2x_i-1\right)^\beta\left(2x_j-1\right)^\beta}-\left(2x_i-1\right)^\beta-\left(2x_j-1\right)^\beta \leq 0,
	$$
	which is trivial due to the fact that
	$$
	2\sqrt{\left(2x_i-1\right)^\beta\left(2x_j-1\right)^\beta}-\left(2x_i-1\right)^\beta-\left(2x_j-1\right)^\beta = - \left(\sqrt{\left(2x_i-1\right)^\beta} - \sqrt{\left(2x_j-1\right)^\beta}\right)^2.
	$$
	Furthermore, since $$  \sum_{0 \leq k \leq 2} \mathbb P\left(\hat \voter'_i+\hat \voter'_j=k\right) = 1,$$ we have 
	$$
	\mathbb P\left(\hat \voter_i+\hat \voter_j=1\right) > \mathbb P\left(\hat \voter'_i+\hat \voter'_j=1\right).
	$$
	
	Considering Equation \eqref{prr}, and the fact that $A_0 \leq A_1$ we  conclude that after this movement, the value of $\probR$ decreases and the value of $\probL$ increases.
	
	Finally, let $C$ and $C'$ be the cost of the agents other than $\voter_i$ and $\voter_j$ for $\canL$ and $\canR$, respectively. By definition, before the displacement, we have 
	$$
	D\left(\canL\right) = \frac{C + [x_i +x_j]}{C' + [x_i +x_j-2]}
	$$
	and  after moving $\voter_i$ and $\voter_j$ to point $t$, we have:
	$$
	D \left(\canL\right) = \frac{C + [\sqrt{\left(2x_i-1\right)\left(2x_j-1\right)}+1]}{C' + [\sqrt{\left(2x_i-1\right)\left(2x_j-1\right)}-1]}.
	$$
	Again, by straightforward calculus, one can  easily verify that
	
	$$
	x_i + x_j \geq \sqrt{\left(2x_i-1\right)\left(2x_j-1\right)}+1,
	$$
	
	Thus, after this displacement, both $D\left(\canL\right)$ and $\probL$ increases, and so does the value of $ \dtwo_\beta\left(\election\right)$. 
\end{proof}

\section{General Metric}
\label{general}
We now extend our results to general metric spaces. Suppose that the voters and candidates are located in an arbitrary metric $\metric$. By definition, for every voter $\voter_i$ and candidates $\canL,\canR$ we have: 

\begin{itemize}
	\item $d_{i,\canL},d_{i,\canR} \geq 0$.
	\item $d_{i,\canL} + d_{i,\canR} \geq d_{\canL,\canR}$ (triangle inequality).
\end{itemize} 
We suppose without loss of generality that $d_{\canL,\canR} = 1$. For this case, we prove Theorem \ref{thm:general}, which states that for every  $0 \leq \beta \leq 1$, the same upper bounds we obtained on the distortion value for the line metric also works for any arbitrary metric space.  

\begin{theorem}
	\label{thm:general}
For every election $\election$ in an arbitrary metric space, there exists an election $\election' $ in line metric, such that $\done(\omega_{\election,\beta}) \leq \done\left(\omega_{\election',\beta}\right)$ and $\dtwo_\beta\left(\election\right) \leq \dtwo_\beta\left(\election'\right)$. 	
\end{theorem}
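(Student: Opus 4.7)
The plan is to construct $\election'$ from $\election$ by moving each voter onto the line radially---scaling her distance pair $(d_{i,\canL}, d_{i,\canR})$ by a common positive factor $\alpha_i$. Because the voting probability $f_\beta$ depends only on the ratio $|d_{i,\canL}-d_{i,\canR}|/(d_{i,\canL}+d_{i,\canR})$, such a rescaling preserves both each voter's preferred candidate and her probability of voting. Consequently $\shL$, $\shR$, $\probL$, and $\probR$ all agree in $\election$ and $\election'$, the expected winner is the same candidate in both elections, and it suffices to show that the ratio of suboptimal to optimal social cost does not decrease.

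Each voter has exactly two rescalings that land her pair on the line: the downscaling $\alpha_i = 1/(d_{i,\canL}+d_{i,\canR}) \leq 1$, which places her in $\regB \cup \regC$, and the upscaling $\alpha_i = 1/|d_{i,\canL}-d_{i,\canR}| \geq 1$, which places her in $\regA \cup \regD$. Assume WLOG that $\canL$ is optimal in $\election$ and define
\[
\phi_i = d_{i,\canR}\,\cost_\election(\canL) - d_{i,\canL}\,\cost_\election(\canR).
\]
The key step is the local rule: choose the upscaling when $\phi_i > 0$ and the downscaling otherwise (note $\phi_i>0$ forces $d_{i,\canL}<d_{i,\canR}$, so the upscaling is well-defined whenever it is selected). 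Since $\sum_i \phi_i = \cost_\election(\canL)\cost_\election(\canR) - \cost_\election(\canR)\cost_\election(\canL) = 0$, I rewrite $\sum_i \alpha_i \phi_i = \sum_i (\alpha_i-1)\phi_i$, in which every summand is nonnegative by the sign match. Unpacking the definition of $\phi_i$ then yields $\cost_\election(\canL)\cost_{\election'}(\canR) \geq \cost_\election(\canR)\cost_{\election'}(\canL)$, i.e., $\cost_{\election'}(\canR)/\cost_{\election'}(\canL) \geq \cost_\election(\canR)/\cost_\election(\canL)$. Thus $\canL$ remains optimal in $\election'$ and $\done(\canR)$ only grows; combined with $\done(\canL)=1$ in both elections and the preserved $\probL,\probR$, this gives both $\done(\omega_{\election',\beta}) \geq \done(\omega_{\election,\beta})$ and $\dtwo_\beta(\election') \geq \dtwo_\beta(\election)$.

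The main obstacle is isolating the correct per-voter rule. Small two-voter examples---say, one voter with $d_{i,\canL}+d_{i,\canR}$ much larger than $1$ and another with $|d_{i,\canL}-d_{i,\canR}|$ much smaller than $1$---show that uniformly applying either of the two scalings can move the distortion ratio in the wrong direction, so the choice has to be voter-dependent. The quantity $\phi_i$ is essentially the marginal pull of voter $i$ on the ratio $\cost(\canR)/\cost(\canL)$, which is exactly the information needed to decide whether to amplify or attenuate each voter's contribution. The remaining checks---that voters preferring $\canR$ automatically satisfy $\phi_i \leq 0$ (so they are consistently downscaled into $\regC$), that indifferent voters land at $x=1/2$, and that each rescaled pair sits in the claimed line region---follow from $|d_{i,\canL}-d_{i,\canR}| \leq 1 \leq d_{i,\canL}+d_{i,\canR}$ by direct algebra.
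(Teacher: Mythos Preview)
Your proof is correct and follows essentially the same construction as the paper: both rescale each voter's distance pair by a common positive factor (preserving voting probabilities and preferred candidates), and both decide between the ``downscale into $\regB\cup\regC$'' and ``upscale into $\regA\cup\regD$'' options by comparing the voter's distance ratio to the overall distortion ratio---your sign of $\phi_i$ is exactly the paper's threshold $\gamma_i \gtrless D(\text{suboptimal})$ after swapping the roles of $\canL$ and $\canR$. Your verification of the cost-ratio inequality via $\sum_i \phi_i = 0$ and the sign-matched sum $\sum_i(\alpha_i-1)\phi_i\ge 0$ is a cleaner packaging of what the paper does through the eight partial sums $A,B,C,D,A',B',C',D'$ and repeated use of the mediant inequality.
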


\begin{proof}
	Let $\election$ be an  election in an arbitrary metric space $\metric_\election$. Assume w.l.o.g. that candidate $\canR$ is the optimal candidate and let $\voters_\election$ be the set of voters in election $\election$. For each voter $\voter_i \in \voters_\election$, let  $\gamma_i = d_{i,\canL} / d_{i,\canR}$. Based on the value of $c_i$, we partition the voters into two subsets $\voters^+_\election$ and $\voters^-_\election$, where 
	$$
	\voters^-_\election = \{\voter_i | \gamma_i \leq D\left(\canL\right) \} $$ $$
	\voters^+_\election = \{\voter_i | \gamma_i > D\left(\canL\right) \}.
	$$    
	Now, we construct election $\election'$ as follows: consider a line and two candidates $\canL',\canR'$ located respectively at $0$ and $1$.     
	For each voter $\voter_i \in \voters^-$, we consider a voter $\voter'_i$ in $\election'$, located at point 
	$
	x'_i = \frac{c_i}{c_i+1}.
	$
	Since 
	\begin{align*}
	\left(\frac{|d_{i,\canL'}-d_{i,\canR'}|}{d_{i,\canL'} + d_{i,\canR'}}\right)^\beta &= \left(\frac{|2x'_i-1|}{1}\right)^\beta\\
	&=  \bigg|2\frac{\gamma_i}{\gamma_i+1}-1\bigg|^\beta \\
	&= \bigg|2 \frac{d_{i,\canL} / d_{i,\canR}}{d_{i,\canL} / d_{i,\canR}+1}-1\bigg|^\beta \\
	&= \bigg| \frac{d_{i,\canL}-d_{i, R}}{d_{i,\canL} + d_{i, R}}\bigg|^\beta,
	\end{align*}
	both $\voter_i$ and $\voter'_i$ participate in their corresponding elections with equal probabilities.
	Similarly, for each voter $\voter_i \in \voters^+$, we consider a voter $\voter'_i$ located at point 
	$
	x_i = \frac{\gamma_i}{\gamma_i-1}.
	$
	Again, it can be observed that 
	$$
	\left(\frac{|d_{i,\canL'}-d_{i,\canR'}|}{d_{i,\canL'} + d_{i,\canR'}}\right)^\beta = \left(\frac{|d_{i,\canL}-d_{i,\canR}|}{d_{i,\canL} + d_{i,\canR}}\right)^\beta.
	$$
	In conclusion, for every $i$, voters $\voter_i$ and $\voter'_i$ cast a vote in their corresponding elections with equal probabilities. Thus,  expected winners of $\election'$ and $\election$ are the same, and we have 
	\begin{align}
	\mathtt{P}_{\canL'} &= \probL, \quad 
	\mathtt{P}_{\canR'} = \probR. \label{fu}
	\end{align}  
	Now, we prove $D\left(\canL\right) \leq D\left(\canL'\right)$. For convenience, let 
	$$
	A= \sum_{\voter_i \in \voters^-} d_{i,\canL} \qquad A'= \sum_{\voter_i \in \voters^-} d_{i,\canL'}
	$$ 
	$$
	B= \sum_{\voter_i \in \voters^-} d_{i,\canR} \qquad B'= \sum_{\voter_i \in \voters^-} d_{i,\canR'}
	$$ 
	$$
	C= \sum_{\voter_i \in \voters^+} d_{i,\canL} \qquad C'= \sum_{\voter_i \in \voters^+} d_{i,\canL'}
	$$ 
	$$
	D= \sum_{\voter_i \in \voters^+} d_{i,\canR} \qquad D'= \sum_{\voter_i \in \voters^+} d_{i,\canR'}.
	$$
	Note that for each $\voter_i \in \voters^-$, $d_{i,\canL'} = d_{i,\canL}/(d_{i,\canR}+d_{i,\canL})$,
	$d_{i,\canR'} = d_{i,\canR}/(d_{i,\canR}+d_{i,\canL})$, and $d_{i,\canR}+d_{i,\canL} \geq 1$. Hence $d_{i,\canL} \geq d_{i,\canL'}$ and $d_{i,\canR} \geq d_{i,\canR'}$. Therefore we have $A-A' \geq 0$ and $B-B' \geq 0$. In addition, 
	\begin{align}
	\frac{A-A'}{B-B'} &= \frac{\sum_{i \in \voters^-}d_{i,\canL} - \frac{\gamma_i}{\gamma_i+1}}{\sum_{i \in \voters^-}d_{i,\canR}- \frac{1}{\gamma_i+1}} \nonumber \\
	&= \frac{\sum_{i \in \voters^-} \frac{d_{i,\canL} + \gamma_i d_{i,\canL}- \gamma_i}{\gamma_i+1}}{\sum_{i \in \voters^-} \frac{ d_{i,\canR}+ \gamma_i d_{i,\canR}-1}{\gamma_i+1}} \nonumber\\
	&\leq \max_{i \in \voters^-} \frac{d_{i,\canL} + \gamma_i d_{i,\canL}- \gamma_i}{ d_{i,\canR}+ \gamma_i d_{i,\canR}-1} \nonumber\\
	&= \max_{i \in \voters^-} \frac{\gamma_i d_{i,\canR} + \gamma_i d_{i,\canL}- \gamma_i}{ d_{i,\canR}+ d_{i,\canL}-1} &(\gamma_i d_{i,\canR} = d_{i,\canL}) \nonumber\\
	&\leq \max_{i \in \voters^-} \gamma_i \nonumber\\
	&\leq D\left(\canL\right) \label{gm_first}
	\end{align}
	On the other hand, for each $\voter_i \in \voters^+$, $d_{i,\canL'} = d_{i,\canL}/(d_{i,\canL}-d_{i,\canR})$, $d_{i,\canR'} = d_{i,\canR}/(d_{i,\canL}-d_{i,\canR})$, and $d_{i,\canL}-d_{i,\canR} \leq 1$. Hence $d_{i,\canL} \leq d_{i,\canL'}$ and $d_{i,\canR} \leq d_{i,\canR'}$. Therefore we have $C'-C \geq 0$ and $D'-D \geq 0$. Furthermore, we have:
			\begin{align}
		\frac{C'-C}{D'-D} &= \frac{\sum_{i \in \voters^+} \frac{\gamma_i}{\gamma_i-1}-d_{i,\canL}}{\sum_{i \in \voters^+}\frac{1}{\gamma_i-1}-d_{i,\canR}} \nonumber \\
		&= \frac{\sum_{i \in \voters^+} \frac{\gamma_i-\gamma_i d_{i, \canL}+d_{i, \canL}}{\gamma_i-1}}{\sum_{i \in \voters^+}\frac{1-\gamma_id_{i,\canR}+d_{i,\canR}}{\gamma_i-1}} \nonumber \\
		&\geq \min_{i \in \voters^+} \frac{\gamma_i-\gamma_i d_{i, \canL}+d_{i, \canL}}{1-\gamma_id_{i,\canR}+d_{i,\canR}} \nonumber \\
		&= \min_{i \in \voters^+} \frac{\gamma_i-\gamma_i d_{i, \canL}+\gamma_id_{i, \canR}}{1-d_{i,\canL}+d_{i,\canR}} &(\gamma_i d_{i,\canR} = d_{i,\canL})  \nonumber \\
		&\geq \min_{i \in \voters^+} \gamma_i \nonumber \\
		&\geq D(\canL). \label{gm_second}
		\end{align}

	By Equations \eqref{gm_first} and \eqref{gm_second}, and using Observation \ref{lem:frac} we have 
\begin{equation}
\frac{(C'-C)-(A-A')}{(D'-D)-(B-B')} \geq D\left(\canL\right), \label{gm_third}
\end{equation}
and
\begin{align}
D\left(\canL\right) &= \frac{A+C}{B+D} \nonumber \\
&\leq \frac{(A+C)+(C'-C)-(A-A')}{(B+D)+(D'-D)-(B-B')} & \text{(Observation \ref{lem:frac} and Equation \eqref{gm_third})} \nonumber\\
&= \frac{C' + A'}{D'+B'} \nonumber\\
&= D(\canL'). 	\label{fc}
\end{align}
	
	Since the expected winner is the same in $\election$ and $\election'$, Inequality  \eqref{fc} immediately implies that $ \done\left(\omega_{\election,\beta}\right) \leq \done\left(\omega_{\election',\beta}\right).$ Furthermore, considering Equations \eqref{eddef} ,\eqref{fu}, and \eqref{fc} we have $\dtwo_\beta\left(\election\right) \leq \dtwo_\beta\left(\election'\right).$ 
\end{proof}

%

\section{Future Directions}
In this study, we analyzed the distortion value in a spatial voting model with two candidates, when the voters are allowed to abstain. The set of results in this paper provides a rather complete picture of the model. Nevertheless, some important open questions remain open.

\begin{itemize}
	\item The most immediate open question is to analyze the expected distortion value of the elections for a small number of voters. The counter-example in Section \ref{aatb} refutes the existence of an upper bound better than $2$. We believe that this example is the worst possible scenario. However, we don't have a formal proof for this claim.

	\item Another direction is to provide a closed-form expression for the distortion of the expected winner. Currently, the maximum distortion is obtained via a mathematical program, which is not even convex \footnote{Of course, we have a short note on how to reduce this program into a convex one, by eliminating some of the variables.}.

\end{itemize}

Beyond the above direct questions,  this research also initiates an interesting line of work and opens a fruitful  direction for the future research. In the following, we discuss two of these directions:

\begin{itemize}
	\item  In this paper, we focused on a majority election between two candidates. When  more than two  candidates are running, vote aggregation becomes more complex. One interesting direction is to generalize the models in this paper for elections with more than two candidates and analyze the performance of different well-established voting mechanisms such as Borda, $k$-approval, Veto, Ranked pairs,  and Copland under abstention assumption. One can also consider abstention in evaluating the distortion of different randomized mechanisms. 
	
	\item 
	Similar to the elections with no abstention, it seems that high distortion scenarios stem from the issue of representativeness of candidates. Cheng et. al. \cite{cheng2017distortion} show that when the candidates are of the people (i.e., they have the same distribution as the voters), distortion ratio improves to a constant upper-bound strictly better than $2$ for general metrics.  The question is, how does the distortion value change if we allow abstention in the societies that voters and candidates have the same distribution? 
\end{itemize}

\vskip 0.2in

\bibliographystyle{plain}
\bibliography{voting}

\end{document}